\documentclass[copyright,creativecommons]{eptcs}
\providecommand{\event}{AiML 2026} 

\usepackage{aiml26}
\usepackage{iftex}
\usepackage{amssymb}
\usepackage{empheq}
\usepackage{xcolor}

\ifpdf
  \usepackage{underscore}
  \usepackage[T1]{fontenc}
\else
  \usepackage{breakurl}
\fi

\title{Knowing-Value Logic with Successor Arithmetic}
\author{Hongyi Wang
\institute{Department of Philosophy\\
Peking University\\
Beijing, China}
\email{wanghongyi@stu.pku.edu.cn}
}

\newcommand{\titlerunning}{Knowing-Value Logic with Successor Arithmetic}
\newcommand{\authorrunning}{Hongyi Wang}

\newcommand{\K}{K}
\newcommand{\Khat}{\widehat{K}}
\newcommand{\Kv}{Kv}
\newcommand{\Ag}{\mathrm{Ag}}
\newcommand{\dep}{\mathrm{depth}}
\newcommand{\hgt}{\mathrm{height}}
\newcommand{\Last}{\mathrm{Last}}

\newcommand{\logic}{\mathbf{ELKvSA}^r}
\newcommand{\sys}{\mathbb{ELKVSA}^r}
\hypersetup{
  bookmarksnumbered,
  pdftitle    = {\titlerunning},
  pdfauthor   = {\authorrunning},
  pdfsubject  = {Epistemic Logic and Arithmetic},
  pdfkeywords = {knowing-value logic, successor arithmetic, epistemic logic, completeness}
}

\begin{document}
\maketitle

\begin{abstract}
In their prior work~\cite{WangFan2014Conditionally}, Wang and Fan proposed conditional knowing-value logic and provided a complete axiomatization. However, in natural language scenarios and logic puzzles, knowing-value reasoning often appears together with arithmetic operations, which motivates us to enrich knowing-value logic with arithmetic function symbols. In this paper, we extend the language of conditional knowing-value logic with equality and the successor function. Due to the failure of compactness over the class of standard models, we additionally introduce non-standard models to facilitate the technical analysis. Our main results establish the finite model property and provide an axiomatization that is strongly complete with respect to the class of non-standard models and weakly complete with respect to the class of standard models. Furthermore, we extend our logic with public announcement operators and use the resulting system to formalize and solve the ``Consecutive Numbers'' puzzle. This work provides a novel framework for integrating epistemic logic with arithmetic.
\end{abstract}
\section{Introduction}
Epistemic logic is a branch of philosophical logic that studies reasoning about knowledge, belief, and related concepts. In this field, propositional knowledge---usually expressed as ``knowing-that''---has been widely studied in classical epistemic logic. However, as \cite{wang2018beyond} points out, natural language also contains many ``knowing-wh'' expressions, such as knowing-what, knowing-why, knowing-how, and knowing-who, which require systematic logical analysis. Therefore, logics of ``knowing-wh'' have attracted much attention in recent years.

This paper focuses on ``knowing-value'' logic, a specific subfield of ``knowing-wh'' research. It formalizes expressions such as ``S knows the value of the password'' or ``S knows what the password is''. In \cite{Plaza1989Logics}, Plaza first proposed the knowing-value operator $\Kv$. Intuitively, $\Kv_i c$ means that agent $i$ knows the value of the non-rigid constant $c$. Plaza also provided an axiomatization for $\mathbf{ELKv}$ (epistemic logic with the knowing-value operator). Building on this work, Wang and Fan introduced a relativized knowing-value operator in \cite{Wang2013Knowingthat}. Intuitively, $\Kv_i(\phi,c)$ means that agent $i$ knows the value of $c$ given the condition $\phi$. This operator can formalize statements such as ``S knows the value of the password, given that the password is a 4-digit number''. In \cite{Wang2013Knowingthat,WangFan2014Conditionally}, they provided a complete axiomatization for the corresponding logics $\mathbf{ELKv}^r$ and $\mathbf{PALKv}^r$ (public announcement logic with the relativized knowing-value operator). There are also other studies on knowing-value logic, such as \cite{Hong2023Knowing}, which introduced the $\K vP$ operator to express knowing the value of a predicate.

However, the logics $\mathbf{ELKv}^r$ and $\mathbf{PALKv}^r$ still have some limitations in both syntax and semantics. 
From a syntactic perspective, because they lack equality and numerical terms, their formalization can be coarse-grained. For instance, $\mathbf{ELKv}^r$ expresses ``S knows the value of the password given that it is a 4-digit number'' as $\Kv(p,c)$, using a propositional letter $p$ for the condition. Similarly, ``S knows that if the password is 4-digit, then it is 1111'' is formalized as $\K(p\to q)$, using $q$ for ``the password is 1111''. This approach hides the internal structure of these propositions. As a result, $\mathbf{ELKv}^r$ fails to capture some reasoning in natural language. In natural language, the second statement clearly implies the first. But in $\mathbf{ELKv}^r$, $\K(p\to q)$ does not imply $\Kv(p,c)$ because $p$ and $q$ are just independent letters. 
Generally, reasoning that involves the interaction between equality, knowing-value, and arithmetic is common in natural language scenarios, but $\mathbf{ELKv}^r$ cannot fully express it. 
From a semantic perspective, a model for $\mathbf{ELKv}^r$ assigns values to constants over a bare, unstructured set of objects $O$. To study specific mathematical structures, such as natural numbers with the successor function $(\mathbb{N},0^\mathbb{N}, S^\mathbb{N})$ or with addition and multiplication $(\mathbb{N}, 0^\mathbb{N},1^\mathbb{N}, +^\mathbb{N}, \times^\mathbb{N})$, we must enrich the language with arithmetic function symbols and add arithmetic axioms to the proof system. 
All these reasons motivate us to enrich knowing-value logic with equality and arithmetic function symbols, starting with the successor function. To further demonstrate the need for this extension, we present the following puzzle.
\begin{example}[\cite{van2015one}]
\label{ex:consecutive-puzzle}
		Anne and Bill get to hear the following: ``Given are two natural numbers. They
	are consecutive numbers. I am going to whisper one of these numbers to Anne and
	the other number to Bill.'' This happens. Anne and Bill now have the following
	conversation.
	\begin{itemize}
		\item Anne: ``I don't know your number.''
		\item Bill: ``I don't know your number.''
		\item Anne: ``I know your number.''
		\item Bill: ``I know your number.''
	\end{itemize}
	How is that possible? What surely is one of the two numbers?
\end{example}

In this puzzle, introducing the successor function allows us to formalize the statement ``Anne knows that $a$ and $b$ are consecutive numbers, but Anne does not know the value of $b$'' as $K_A (a \approx Sb \vee b \approx Sa) \wedge \neg Kv_A b$. Instead of using a simple propositional letter $p$ to represent ``$a$ and $b$ are consecutive'', this formalization preserves the internal structure and relevant details of the puzzle. 
Furthermore, the reasoning in the puzzle crucially relies on the premise that Anne and Bill both know that $0$ is not a successor in natural numbers. In first-order logic, this can be derived by an axiom of successor arithmetic: $\forall x \neg (Sx \approx 0)$. Although our logic does not have quantifiers, we can capture this by introducing an axiom schema $\neg Sc \approx 0$ for any constant $c$. We formalize and solve the puzzle in Section~\ref{sec:consecutive-numbers}.

Our main contributions are as follows. First, we introduce $\mathbf{PALKvSA}^r$, an extension of $\mathbf{PALKv}^r$ with equality and the successor function; its announcement-free fragment is $\logic$. We define semantics for these languages over both standard and non-standard models. Second, for $\logic$, we propose the axiom system $\sys$ and prove that it is strongly complete with respect to non-standard models and weakly complete with respect to standard models. Third, we establish that $\logic$ has the finite model property and is decidable. Finally, we show how the ``Consecutive Numbers'' puzzle can be formalized and solved within this setting.
Overall, our work develops a novel framework that integrates epistemic logic with arithmetic, which is of independent theoretical interest and has many potential applications.

The rest of the paper is organized as follows. Section~\ref{sec:preliminaries} briefly reviews the preliminaries. Section~\ref{sec:language-semantics} introduces the language and semantics of our logic. Section~\ref{sec:axiomatization} presents the axiom system $\sys$ and proves its strong completeness over non-standard models. Section~\ref{sec:standard-models} establishes the finite model property, weak completeness over standard models, and decidability. Section~\ref{sec:consecutive-numbers} formalizes the puzzle using public announcement operators. Finally, Section~\ref{sec:conclusion} concludes the paper.

\section{Preliminaries}
\label{sec:preliminaries}
This section briefly recalls the background needed in our work: the conditional knowing-value logic and first-order successor arithmetic.
\subsection{Conditional Kv Logic}
\begin{definition}[Language $\mathbf{PALKv}^r$]
	Given a countably infinite set of proposition letters $\mathbf{P}$, a countably infinite set
of agent names $\mathbf{I}$, and a countably infinite set of (non-rigid) constant symbols
$\mathbf{C}$, the language of $\mathbf{PALKv}^r$ is defined as follows:
	\[
\phi ::= \top \mid p \mid \neg \phi \mid \phi \land \phi \mid K_i \phi \mid Kv_i(\phi, c) \mid [\phi] \phi
\]
where $p \in \mathbf{P}, i \in \mathbf{I}, c \in \mathbf{C}$.
\end{definition}
We use the standard abbreviations $\phi\to\psi:=\neg(\phi\land\neg\psi)$, $\phi\lor\psi:=\neg(\neg\phi\land\neg\psi)$, $\phi\leftrightarrow\psi:=(\phi\to\psi)\land(\psi\to\phi)$, $\bot:=\neg\top$, and $\widehat{K}_i\phi:=\neg K_i\neg\phi$.
	A model $\mathcal{M}$ is defined as $\langle W,\{\sim_i\mid i\in I\}, O,V, V_\mathbf{C}\rangle$, where $W$ is a non-empty set of possible worlds, $\sim_i$ is an equivalence relation on $W$, $O$ is a non-empty set of objects, $V$ is a valuation function assigning a set of worlds to each $p\in\mathbf{P}$, and $V_\mathbf{C}:\mathbf{C}\times W\rightarrow O$ assigns an object for each constant-world pair. For a formula $\psi$, write $\mathcal{M}|_\psi$ for the restriction of $\mathcal{M}$ to $W_\psi=\{w\in W\mid \mathcal{M},w\vDash\psi\}$. The semantics is defined as follows:
\[
\begin{array}{|lcl|}
\hline
\mathcal{M}, w \vDash \top & & \text{always holds} \\
\mathcal{M}, w \vDash p & \Leftrightarrow & w \in V(p) \\
\mathcal{M}, w \vDash \neg \phi & \Leftrightarrow & \mathcal{M}, w \not\vDash \phi \\
\mathcal{M}, w \vDash \phi \land \psi & \Leftrightarrow & \mathcal{M}, w \vDash \phi \text{ and } \mathcal{M}, w \vDash \psi \\
\mathcal{M}, w \vDash K_i \psi & \Leftrightarrow & \text{for all } u \text{ such that } w \sim_i u: \mathcal{M}, u \vDash \psi \\
\mathcal{M}, w \vDash Kv_i(\phi, c) & \Leftrightarrow & \text{for any } u, v \in W \text{ such that } w \sim_i u \text{ and } w \sim_i v: \\
& & \quad \text{if } \mathcal{M}, u \vDash \phi \text{ and } \mathcal{M}, v \vDash \phi, \text{ then } V_{\mathbf{C}}(c, u) = V_{\mathbf{C}}(c, v) \\
\mathcal{M}, w \vDash [\psi] \phi & \Leftrightarrow & \mathcal{M}, w \vDash \psi \text{ implies } \mathcal{M}|_\psi, w \vDash \phi\\
\hline
\end{array}
\]
We denote the announcement-free part of $\mathbf{PALKv}^r$ as $\mathbf{ELKv}^r$.
It is shown in \cite{WangFan2014Conditionally} that the following system $\mathbb{ELKV}^r$ completely axiomatizes $\mathbf{ELKv}^r$.
\begin{center}
$\text{System } \mathbb{ELKV}^r$
\end{center}
\begin{tabular}{cc}
\begin{tabular}[t]{lc}
\textbf{Axiom Schemas} & \\
TAUT & all the instances of tautologies \\
DISTK & $K_i(\phi \to \psi) \to (K_i\phi \to K_i\psi)$ \\
T & $K_i\phi \to \phi$ \\
4 & $K_i\phi \to K_iK_i\phi$ \\
5 & $\neg K_i\phi \to K_i\neg K_i\phi$ \\
DISTKv$^r$ & $K_i(\phi \to \psi) \to (Kv_i(\psi, c) \to Kv_i(\phi, c))$ \\
Kv$^r$4 & $Kv_i(\phi, c) \to K_iKv_i(\phi, c)$ \\
Kv$^r\bot$ & $Kv_i(\bot, c)$ \\
Kv$^r\lor$ & $\widehat{K}_i(\phi \land \psi) \land Kv_i(\phi, c) \land Kv_i(\psi, c) \to Kv_i(\phi \lor \psi, c)$
\end{tabular}
&
\begin{tabular}[t]{lc}
\textbf{Rules} & \\
MP & $\dfrac{\phi, \phi \to \psi}{\psi}$ \\
NECK & $\dfrac{\phi}{K_i\phi}$ \\
RE & $\dfrac{\psi \leftrightarrow \chi}{\phi \leftrightarrow \phi[\psi/\chi]}$
\end{tabular}
\end{tabular}
\medskip

\noindent By reduction, $\mathbb{ELKV}^r$ together with the following reduction axioms completely axiomatize $\mathbf{PALKv}^r$. The resulting system is called $\mathbb{PALKV}^r$.
\[
{
\begin{array}{l l}
\text{!ATOM} & [\psi]p \leftrightarrow (\psi \to p) \\
\text{!NEG}  & [\psi]\neg\phi \leftrightarrow (\psi \to \neg[\psi]\phi) \\
\text{!CON}  & [\psi](\phi \land \chi) \leftrightarrow ([\psi]\phi \land [\psi]\chi) \\
\text{!K}     & [\psi]\K_i \phi \leftrightarrow (\psi \to \K_i[\psi]\phi)\\
\text{!Kv}^r &  [\phi] \Kv_i(\psi, c) \leftrightarrow \bigl( \phi \to \Kv_i\bigl(\phi\wedge[\phi] \psi, c\bigr)\bigr)\\
\end{array}
}
\]

\noindent The following reduction axiom !COM is derivable from $\mathbb{PALKV}^r$.
\[
\text{!COM} \quad [\psi][\chi]\phi \leftrightarrow [\psi \land [\psi]\chi]\phi
\]

\subsection{\texorpdfstring{First-Order Successor Arithmetic}{First-Order Successor Arithmetic}}
\begin{theorem}
	Given the symbol set $\{0, S\}$ and the structure $\mathfrak{N}=(\mathbb{N},\mathbf{S})$ where $\mathbb{N}$ is the set of natural numbers and $\mathbf{S}$ is the successor function on it, the first-order theory of this structure $Th(\mathfrak{N})$ is axiomatized by the following three axioms and an additional axiom schema.
	\begin{align*}
		&\forall x(\neg Sx \approx 0)\\
		&\forall x\forall y (Sx\approx Sy\rightarrow x\approx y)\\
		&\forall x (\neg x\approx 0\rightarrow \exists y x\approx Sy)\\
		&\forall x (\neg S^n(x)\approx x) \text{ for every natural number }n\geq 1
		\end{align*}
	where $S^n(x)$ is an abbreviation of $\underbrace{SS\cdots S}_{n\text{ times}}(x)$.
\end{theorem}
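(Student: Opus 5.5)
Call the displayed axioms $T_S$. Since $\mathfrak{N}$ satisfies every axiom, $T_S \subseteq Th(\mathfrak{N})$. It remains to show that $T_S$ is complete, i.e.\ that for every sentence $\sigma$ either $T_S\vDash\sigma$ or $T_S\vDash\neg\sigma$; then $Th(T_S)=Th(\mathfrak{N})$. I would obtain completeness by the \L{}o\'s--Vaught test. This needs $T_S$ to have no finite models and to be $\kappa$-categorical for some infinite $\kappa$. The language is countable, so the test applies.

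\textbf{Structure of models.} Let $\mathfrak{A}\vDash T_S$. By the first two axioms, $S$ is injective and $0$ is not in its range. By the third axiom, every element other than $0$ has a (unique) predecessor. By the schema, $S$ has no cycles. So the orbit $\{0, S0, SS0,\dots\}$ of $0$ is a copy of $\mathbb{N}$, and its elements are pairwise distinct. In particular $\mathfrak{A}$ is infinite, which gives the first hypothesis of the test.

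Now take any element $a$ outside this orbit. Its forward and backward iterates $\dots,S^{-1}a,a,Sa,\dots$ are all defined and pairwise distinct. They cannot reach $0$, since then $a$ would lie in the orbit of $0$. So they form a copy of $\mathbb{Z}$. Distinct such chains are disjoint, because $S$ is injective and predecessors are unique. Hence every model is isomorphic to $\mathbb{N}\sqcup\bigsqcup_{j\in J}\mathbb{Z}$ with $0$ and $S$ interpreted in the obvious way. Moreover, this structure is determined up to isomorphism by the cardinal $|J|$.

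\textbf{Categoricity.} Let $\kappa$ be uncountable and $|\mathfrak{A}|=\kappa$. Each $\mathbb{Z}$-chain is countable, so $|J|=\kappa$. Two models of size $\kappa$ therefore have the same number of $\mathbb{Z}$-chains. Their $\mathbb{N}$-parts match, and we can match their $\mathbb{Z}$-chains by a bijection of index sets, choosing a base point in each chain to fix the isomorphism $\mathbb{Z}\cong\mathbb{Z}$. This shows $T_S$ is $\kappa$-categorical, say for $\kappa=\aleph_1$. \L{}o\'s--Vaught then gives that $T_S$ is complete, and the theorem follows.

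\textbf{Main obstacle.} The delicate step is the structural decomposition. One must check that the schema $\forall x\,\neg S^n x\approx x$ for all $n\ge 1$ excludes finite cycles. One must also check that, together with the predecessor axiom, it makes every non-standard orbit a full $\mathbb{Z}$-chain, rather than a cycle or an $\mathbb{N}$-like chain with a second starting point. Once the decomposition is in place, the categoricity count is routine.
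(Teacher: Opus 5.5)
Your proof is correct. The decomposition of every model into $\mathbb{N}\sqcup\bigsqcup_{j\in J}\mathbb{Z}$ gives categoricity in every uncountable cardinal, and the \L{}o\'s--Vaught test (with consistency supplied by $\mathfrak{N}\models T_S$) then gives completeness, hence that the axioms axiomatize $Th(\mathfrak{N})$. The paper gives no proof of its own and simply defers to textbooks; your argument is the standard textbook proof of this result, the other common route being quantifier elimination.
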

\begin{proof}
	Its proof can be found in many first-order logic textbooks, for example, \cite{EbbinghausMathematicalLogic}.
\end{proof}
\section{Language and Semantics}
\label{sec:language-semantics}

In this section, we formally define the language and semantics of our logic. We define both standard and non-standard models. Note that the classes of standard and non-standard models induce two distinct semantics. When a property or result is stated w.r.t.\ standard (or non-standard) models, it is to be understood as holding under the semantics induced by the respective class of models.
We now introduce the language $\mathbf{PALKvSA}^r$ and write $\mathbf{ELKvSA}^r$ for its announcement-free fragment. 
\begin{definition}[Language $\mathbf{PALKvSA}^r(\mathbf{I},\mathbf{C})$]
    	Given a countable constant symbol set $\mathbf{C}$ and a countable set of agents $\mathbf{I}$, the set of terms $\mathrm{Term}$ and formulas are defined as follows.
	\begin{align*}
		t&::= 0\mid c\mid St\\
		\varphi &::=\top\mid t\approx t\mid\neg\varphi\mid \varphi\wedge\varphi\mid \K_i\varphi \mid \Kv_i(\varphi,t)\mid [\varphi]\varphi
	\end{align*}
	where $c\in \mathbf{C}$, $i\in \mathbf{I}$, $0$ is a rigid constant symbol, $S$ is the successor function symbol. Formulas of the form $t\approx t'$ are called atomic formulas (in $\mathbf{PALKv}^r$, atomic formulas are propositional letters). We write $Kv_i t$ as an abbreviation of $Kv_i(\top,t)$.
 \end{definition}
    \begin{definition}[Semantics]
	A standard model is a tuple $\mathcal{M} = \langle W,\mathbb{N}, \mathbf{S}, \{\sim_i\}_{i\in \mathbf{I}}, V_\mathbf{C}\rangle$, where $\sim_i$ is an equivalence relation on the set of possible worlds $W$ for each agent $i$. Here $\mathbb{N}$ is the set of natural numbers and $\mathbf{S}$ is the successor function on $\mathbb{N}$, and $V_\mathbf{C}: \mathbf{C}\times W \rightarrow \mathbb{N}$ assigns a natural number for each constant-world pair.
\end{definition}
\noindent Define the assignment for terms $V_\mathrm{T}: \mathrm{Term} \times W\rightarrow \mathbb{N}$ as follows:
\begin{align*}
	V_\mathrm{T}(0,w) &= \mathbf{0}\\
	V_\mathrm{T}(c,w) &= V_\mathbf{C}(c,w)\\
	V_\mathrm{T}(St,w) &= \mathbf{S}(V_\mathrm{T}(t,w))
\end{align*}
For a formula $\psi$, write $\mathcal{M}|_\psi$ for the restriction of $\mathcal{M}$ to $W_\psi=\{w\in W\mid \mathcal{M},w\vDash\psi\}$.
Then the satisfaction relation is defined as follows:
\begin{center}
    \fbox{
        \begin{tabular}{l c l}
            $\mathcal{M},w \models t_1 \approx t_2$ & $\Leftrightarrow$ & $V_\mathrm{T}(t_1,w)=V_\mathrm{T}(t_2,w)$ \\[1ex]
            $\mathcal{M},w \models \Kv_i(\varphi,t)$ & $\Leftrightarrow$ & for any $u,v\in W$ such that $w\sim_i u, w\sim_i v$, \\
            & & if $\mathcal{M},u \models \varphi$ and $\mathcal{M},v \models \varphi$, then $V_\mathrm{T}(t,u)=V_\mathrm{T}(t,v)$\\
            $\mathcal{M},w \models [\varphi]\psi$ & $\Leftrightarrow$ & $\mathcal{M},w \models \varphi$ implies $\mathcal{M}|_\varphi,w \models \psi$
        \end{tabular}
    }
\end{center}
\medskip

Similar to $\mathbf{PALKv}^r$, if an axiom system $\mathbb{S}$ completely axiomatizes $\mathbf{ELKvSA}^r$, then $\mathbb{S}$ combined with the reduction axioms !ATOM, !NEG, !CON, !K, $!\text{Kv}^r$ and RE completely axiomatizes $\mathbf{PALKvSA}^r$, where !ATOM is $[\varphi]\alpha \leftrightarrow (\varphi \to \alpha)$ for any atomic formula $\alpha$.
Therefore, we will focus exclusively on axiomatizing $\mathbf{ELKvSA}^r$ in the remainder of this paper.
However, as the following proposition demonstrates, $\mathbf{ELKvSA}^r$ is not compact with respect to the class of standard models. Consequently, there is no sound and strongly complete axiomatization\footnote{Strong completeness means that, for every set of formulas $\Gamma$ and formula $\varphi$, if $\Gamma\models \varphi$, then $\Gamma\vdash \varphi$; Weak completeness means that, for every formula $\varphi$, if $\models \varphi$, then $\vdash \varphi$.} for $\mathbf{ELKvSA}^r$ over standard models. 
\begin{proposition}
	\label{prop:noncompact}
	$\mathbf{ELKvSA}^r$ is not compact w.r.t. standard models.
\end{proposition}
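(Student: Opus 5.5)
We exhibit an infinite set $\Gamma$ of formulas that is finitely satisfiable but not satisfiable on any standard model. The idea is that in $\mathbb{N}$ every number is a finite iterate of the successor applied to $0$. So a set saying ``$c$ is not $S^n0$ for any $n$'' is unsatisfiable only if it also forces $c$ into the ``standard part''. That last requirement cannot be expressed without quantifiers, so we need a different source of contradiction. The simplest choice is to exploit the epistemic operators together with the finiteness of predecessor chains.

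We fix an agent $i$ and constants $c,d\in\mathbf{C}$. Let
\[
\Gamma=\{\neg\Kv_i c\}\cup\{\K_i(c\approx S^n0\to\bot)\mid n\in\mathbb{N}\}.
\]
The plan has three steps, carried out in this order.
\begin{enumerate}
\item \textbf{Unsatisfiability.} Suppose $\mathcal{M},w\models\Gamma$ for a standard model $\mathcal{M}$. Since $\sim_i$ is reflexive, $w\sim_i w$, and $V_\mathrm{T}(c,w)=m$ for some $m\in\mathbb{N}$. Then $\mathcal{M},w\models c\approx S^m0$. But $\K_i(c\approx S^m0\to\bot)$ holds at $w$ and implies $\neg c\approx S^m0$ at $w$, which is a contradiction. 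In fact $\{\neg c\approx S^n0\mid n\in\mathbb{N}\}$ alone is already unsatisfiable in standard models, because every natural number equals $S^n\mathbf{0}$ for some $n$.
\item \textbf{Simplification.} This observation lets us take the simpler set $\Gamma=\{\neg c\approx S^n0\mid n\in\mathbb{N}\}$, with no epistemic formulas at all.
\item \textbf{Finite satisfiability.} Let $\Gamma_0\subseteq\Gamma$ be finite, and let $N$ be larger than every $n$ with $\neg c\approx S^n0\in\Gamma_0$. Take a one-world standard model $W=\{w\}$ with $\sim_i$ the identity for all $i$, and set $V_\mathbf{C}(c,w)=N$ (other constants arbitrary). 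Then $V_\mathrm{T}(S^n0,w)=n\neq N$ for each relevant $n$, so $\mathcal{M},w\models\Gamma_0$.
\end{enumerate}

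By induction on $n$ we get $V_\mathrm{T}(S^n0,w)=n$, which gives both directions above. Hence $\Gamma$ is finitely satisfiable but not satisfiable, so compactness fails. Equivalently, $\Gamma\models\bot$ while no finite subset entails $\bot$.

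There is no real obstacle in this argument. The only point needing care is the term-evaluation fact $V_\mathrm{T}(S^n0,w)=\mathbf{S}^n(\mathbf{0})=n$. With that fact, the appeal to ``every natural number is a finite successor of $0$'' is immediate. This is exactly what non-standard models, as introduced later, will violate.
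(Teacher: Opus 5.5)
Your final argument is correct and is the same as the paper's first counterexample: the set $\{\neg c\approx S^n0\mid n\in\mathbb{N}\}$ is finitely satisfiable in a one-world standard model, but it is unsatisfiable because every value in $\mathbb{N}$ equals $\mathbf{S}^m(\mathbf{0})$ for some $m$. Your opening claim that this set needs an extra, quantifier-requiring ``standard part'' condition is mistaken, since in standard models every value is standard automatically, so you should simply delete the detour through $\neg\Kv_i c$ and the unused constant $d$.
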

\begin{proof}
	Consider the formula set $\{\neg c\approx S^n 0\mid n\in\mathbb{N}\}$. It is finitely satisfiable, but not satisfiable. Also consider the formula set $\{\Kv_ic\}\cup \{\neg \K_i(c\approx S^n0)\mid n\in\mathbb{N}\}$. It is also finitely satisfiable, but not satisfiable.
\end{proof}
\noindent Therefore, for technical reasons, we introduce non-standard models to facilitate our analysis.
$(\mathbb{N}^\star, \mathbf{S}^\star)$ is a non-standard model of first-order successor arithmetic, where $\mathbb{N}^\star= \mathbb{N}\cup\{(n,z)\mid n\in \mathbb{N}, z\in \mathbb{Z}\}$, and $\mathbf{S}^\star(n)=n+1$, $\mathbf{S}^\star(n,z)=(n,z+1)$.
A non-standard model for our logic is then a tuple based on this arithmetic structure:
     $$\mathcal{N} = \langle W,\mathbb{N}^\star, \mathbf{S}^\star, \{\sim_i\}_{i\in I}, V_\mathbf{C}^\star\rangle$$
The definition of $V_\mathbf{C}^\star$, $V_\mathrm{T}^\star$ and the satisfaction relation are as before, with $(\mathbb{N}^\star,\mathbf{S}^\star)$ in place of $(\mathbb{N},\mathbf{S})$. For convenience, let $Z_n:=\{(n,z)\mid z\in \mathbb{Z}\}$ be the $n$th $\mathbb{Z}$-chain. 

\begin{remark}
    Note that our non-standard models are defined to have countably infinitely many $\mathbb{Z}$-chains (indexed by $n \in \mathbb{N}$). This specific cardinality is intentionally chosen. On one hand, if we restrict the class to non-standard models with only finitely many $\mathbb{Z}$-chains, compactness would still fail. On the other hand, introducing uncountably many $\mathbb{Z}$-chains is unnecessary, as constant symbols are countable.
In the rest of the paper, ``non-standard models'' means models based on the fixed arithmetic structure $(\mathbb{N}^\star,\mathbf{S}^\star)$ above. However, once completeness is proved for this class, it also holds for any larger class of models that contains it, such as the class obtained by allowing arbitrary non-standard models of successor arithmetic.
\end{remark}

\section{Axiomatization of \texorpdfstring{$\logic$}{ELKvSA\string^r}}
\label{sec:axiomatization}
In this section, we present the axiom system $\sys$ for the logic $\logic$. We first establish the soundness of the system with respect to both standard and non-standard models. Subsequently, we prove the strong completeness of $\sys$ over the class of non-standard models.

\subsection{\texorpdfstring{System $\sys$}{System ELKVSA\string^r}}
{\small
\begin{center}
\begin{tabular}{cc}
\multicolumn{2}{c}{$\text{System } \mathbb{ELKVSA}^{r}$}\\[0.3em]
\begin{tabular}[t]{lc}
\textbf{Axiom Schemas} & \\
TAUT & all the instances of tautologies \\
DISTK & $\K_i(\varphi\to\psi)\to(\K_i\varphi\to\K_i\psi)$ \\
T & $\K_i\varphi\rightarrow\varphi$ \\
4 & $\K_i\varphi\rightarrow \K_i\K_i\varphi$ \\
5 & $\neg \K_i\varphi\rightarrow \K_i\neg \K_i\varphi$ \\
ID & $t\approx t$ \\
SI & $t\approx t'\rightarrow (\alpha\rightarrow\alpha(t'/t))$ \\
& where $\alpha$ is an atomic formula \\
SA1 & $\neg 0\approx St$ \\
SA2 & $St_1\approx St_2\rightarrow t_1\approx t_2$ \\
SA3 & $\neg S^n t\approx t, \text{ where }n\ge 1$ \\
Kv$^r$0 & $\Kv_i 0$ \\
Kv$^r$S & $\Kv_i (\varphi, t)\leftrightarrow\Kv_i (\varphi, St)$ \\
Kv$^r$4 & $\Kv_i (\varphi, t)\rightarrow \K_i\Kv_i (\varphi, t)$ \\
KVE1 & $\Kv_it_1 \wedge \K_i(\varphi \to t_1 \approx t_2) \to \Kv_i(\varphi, t_2)$ \\
KVE2 & $\Kv_it_1 \wedge \Kv_it_2 \to (\widehat{\K}_i(t_1 \approx t_2) \to \K_i(t_1 \approx t_2))$
\end{tabular}
&
\begin{tabular}[t]{lc}
\textbf{Rules} & \\
MP & $\dfrac{\varphi\quad\varphi\to\psi}{\psi}$ \\
NECK & $\dfrac{\varphi}{\K_i\varphi}$ \\
RE & $\dfrac{\psi\leftrightarrow\chi}{\varphi\leftrightarrow\varphi[\psi/\chi]}$ \\
$\text{Kv}^r$-Elim & $\dfrac{(\Kv_ic \wedge \K_i(\varphi \to c \approx t)) \to \psi}{\Kv_i(\varphi, t) \to \psi}$ \\
& where $c$ does not occur in $\varphi, \psi, t$
\end{tabular}
\end{tabular}
\end{center}
}

\begin{proposition}
    The $\text{Kv}^r$-Elim rule preserves validity w.r.t. both standard and non-standard models. 
\end{proposition}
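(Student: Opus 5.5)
The plan is to argue by contraposition. Suppose the premise $(\Kv_ic \wedge \K_i(\varphi \to c \approx t)) \to \psi$ is valid over a class of models (standard, or non-standard), with $c$ not occurring in $\varphi,\psi,t$. Suppose also that the conclusion fails at some $\mathcal{M},w$, so $\mathcal{M},w\models \Kv_i(\varphi,t)$ and $\mathcal{M},w\not\models\psi$. We will build a model $\mathcal{M}'$ in the same class, differing from $\mathcal{M}$ only in the valuation of $c$, such that $\mathcal{M}',w\models \Kv_ic\wedge \K_i(\varphi\to c\approx t)$ while $\psi$ still fails at $w$. This contradicts validity of the premise.

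The first step is a locality lemma. If two models share $W$, the relations $\sim_i$ and the domain, and their valuations agree on every constant except $c$, then they agree on the value of every term not containing $c$ and on the truth of every formula not containing $c$. This goes by a routine induction on terms and then on formulas. The atomic, $\K_i$ and $\Kv_i$ clauses only ever evaluate terms and subformulas at worlds, and those are all $c$-free. Given this lemma, $\varphi$, $\psi$ and $t$ keep their values in $\mathcal{M}'$.

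The second step is to define $V'_\mathbf{C}(c,\cdot)$. Let $[w]_i$ be the $\sim_i$-class of $w$. Because $\Kv_i(\varphi,t)$ holds at $w$, all $\varphi$-worlds in $[w]_i$ give $t$ the same value.
\begin{itemize}
\item If some $\varphi$-world exists in $[w]_i$, call this common value $o$.
\item If there is none, let $o$ be any element of the domain, for instance $\mathbf{0}$.
\end{itemize}
Set $V'_\mathbf{C}(c,u)=o$ for every $u\in[w]_i$, and leave $V_\mathbf{C}(c,u)$ unchanged outside $[w]_i$. The domain and the structure $(\mathbb{N},\mathbf{S})$ or $(\mathbb{N}^\star,\mathbf{S}^\star)$ are untouched, so $\mathcal{M}'$ stays in the same class; this is why the argument works uniformly for standard and non-standard models.

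The final step is verification at $w$. First, $c$ is constant on $[w]_i$, so $\mathcal{M}',w\models\Kv_ic$, that is, $\Kv_i(\top,c)$. Second, let $u\sim_i w$ with $\mathcal{M}',u\models\varphi$. By the locality lemma, $u$ is a $\varphi$-world in $\mathcal{M}$, so $V_\mathrm{T}(t,u)=o=V'_\mathbf{C}(c,u)$, which gives $\mathcal{M}',u\models c\approx t$. Hence $\K_i(\varphi\to c\approx t)$ holds at $w$. Third, $\psi$ still fails at $w$ by locality. So the antecedent of the premise holds at $w$ but $\psi$ does not, contradicting validity.

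The only delicate point is the locality lemma in the $\Kv_i(\chi,s)$ clause. There we need both $\chi$ and $s$ to be $c$-free, which holds because they are subterms or subformulas of $c$-free material, so the induction goes through.
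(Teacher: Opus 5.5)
Your proof is correct and follows essentially the same argument as the paper. In both, the fresh constant $c$ is reinterpreted as the common value $o$ of $t$ on the $\varphi$-worlds $i$-accessible from $w$, and the fact that $c$ does not occur in $\varphi,\psi,t$ carries truth values over to the new model, which contradicts validity of the premise. The differences are immaterial: the paper sets $c$ to $o$ globally rather than only on $[w]_i$, and it leaves implicit both the case with no $\varphi$-worlds and the locality induction, which you spell out.
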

\begin{proof}
    Assume for contradiction that the premise $\models (\Kv_ic \wedge \K_i(\varphi \to c \approx t)) \to \psi$ is valid, but the conclusion $\models \Kv_i(\varphi, t) \to \psi$ is not.
    Since the conclusion is not valid, there exists a model $\mathcal{M} = \langle W,\mathbb{O}, \mathbf{S}^\mathbb{O}, \{\sim_i\mid i\in I\}, V_\mathbf{C} \rangle$ and a world $w \in W$ such that $\mathcal{M}, w \models \Kv_i(\varphi, t)$ but $\mathcal{M}, w \not\models \psi$, where $(\mathbb{O},\mathbf{S}^\mathbb{O})$ is either $(\mathbb{N}, \mathbf{S})$ or $(\mathbb{N}^\star, \mathbf{S}^\star)$.
     From $\mathcal{M}, w \models \Kv_i(\varphi, t)$, we know there exists an element $o \in \mathbb{O}$ such that for all $u \in W$ with $w \sim_i u$, if $\mathcal{M}, u \models \varphi$, then $V_\mathrm{T}(t, u) = o$.
    Now, we construct a new model $\mathcal{M}' = \langle W,\mathbb{O}, \mathbf{S}^\mathbb{O}, \{\sim_i\mid i\in I\}, V'_\mathbf{C} \rangle$ which is exactly the same as $\mathcal{M}$, except that we force the interpretation of the constant $c$ to be $o$ in all worlds (i.e., $V'_\mathbf{C}(c, v) = o$ for all $v \in W$).
    Because $c$ does not occur in $\varphi, \psi$, or $t$, the truth values of these formulas and the valuation of $t$ remain completely unchanged in $\mathcal{M}'$. In particular, we still have $\mathcal{M}', w \not\models \psi$.
    However, let us evaluate the premise in $\mathcal{M}'$ at $w$:
    \begin{itemize}
        \item Since $c$ is assigned the same value $o$ globally, $\mathcal{M}', w \models \Kv_ic$ is true.
        \item For any $u \in W$ such that $w \sim_i u$, if $\mathcal{M}', u \models \varphi$, then $\mathcal{M}, u \models \varphi$, which implies $V'_\mathrm{T}(t, u) = V_\mathrm{T}(t, u) = o$. Since $V'_\mathbf{C}(c, u) = o$, we have $V'_\mathbf{C}(c, u) = V'_\mathrm{T}(t, u)$, meaning $\mathcal{M}', u \models c \approx t$. Thus, $\mathcal{M}', w \models \K_i(\varphi \to c \approx t)$ is true.
    \end{itemize}
    Combining these, we get $\mathcal{M}', w \models \Kv_ic \wedge \K_i(\varphi \to c \approx t)$. 
    Since we assumed the premise $(\Kv_ic \wedge \K_i(\varphi \to c \approx t)) \to \psi$ is globally valid, it must hold in $\mathcal{M}'$ at $w$. This forces $\mathcal{M}', w \models \psi$.
    This directly contradicts our earlier deduction that $\mathcal{M}', w \not\models \psi$. Therefore, our initial assumption must be false, and the rule preserves validity.
\end{proof}
\begin{theorem}
    $\sys$ is sound w.r.t. both standard and non-standard models.
\end{theorem}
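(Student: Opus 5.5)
The proof is by the usual induction on the length of derivations. We show that every axiom instance is valid on every standard and every non-standard model, and that every rule preserves validity over each of the two classes. The key observation is that in both arithmetic structures $(\mathbb{N},\mathbf{S})$ and $(\mathbb{N}^\star,\mathbf{S}^\star)$, $\mathbf{S}$ is injective, never takes the value $\mathbf{0}$, and has no cycles. Hence the arithmetic axioms can be handled uniformly, with $(\mathbb{O},\mathbf{S}^{\mathbb{O}})$ standing for either structure, exactly as in the proof of the preceding proposition.

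\textbf{Propositional, epistemic and arithmetic part.} TAUT and MP are immediate. Since each $\sim_i$ is an equivalence relation, DISTK, T, 4, 5 and NECK are the standard S5 facts. RE holds by a routine induction on $\varphi$, because the satisfaction clauses are compositional in the truth sets of subformulas; this includes the clause for $\Kv_i(\varphi,t)$, which depends only on the set of $\varphi$-worlds.

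ID is trivial. For SI we use a term induction: if $V_\mathrm{T}(t,w)=V_\mathrm{T}(t',w)$, then $V_\mathrm{T}(s,w)=V_\mathrm{T}(s(t'/t),w)$ for every term $s$, so any atomic formula keeps its truth value under the substitution. SA1, SA2 and SA3 follow from the three properties of $\mathbf{S}$ listed above. For $\mathbb{N}^\star$ we check them directly. $\mathbf{S}^\star$ maps $\mathbb{N}$ into $\mathbb{N}\setminus\{0\}$ and maps each $Z_n$ bijectively to itself. Moreover $\mathbf{S}^{\star n}(m,z)=(m,z+n)\neq(m,z)$ for $n\ge1$.

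\textbf{Knowing-value axioms.} Kv$^r$0 holds because $V_\mathrm{T}(0,\cdot)$ is constant. For Kv$^r$S, injectivity of $\mathbf{S}$ gives $V_\mathrm{T}(t,u)=V_\mathrm{T}(t,v)$ iff $V_\mathrm{T}(St,u)=V_\mathrm{T}(St,v)$. Kv$^r$4 holds because the truth condition of $\Kv_i(\varphi,t)$ at $w$ depends only on the $\sim_i$-class of $w$, and this class is the same for every $u\sim_i w$.

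For KVE1, the antecedent $\Kv_it_1$ says that $t_1$ takes a single value $o$ throughout the $\sim_i$-class of $w$. The conjunct $\K_i(\varphi\to t_1\approx t_2)$ then forces $t_2$ to equal $o$ at every $\varphi$-world of that class, which is exactly $\Kv_i(\varphi,t_2)$. For KVE2, suppose $t_1$ is constantly $o_1$ and $t_2$ is constantly $o_2$ on the class. Then $\widehat{\K}_i(t_1\approx t_2)$ gives $o_1=o_2$, so $t_1\approx t_2$ holds at every world of the class.

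\textbf{The rule Kv$^r$-Elim.} This was established in the preceding proposition, for both classes of models.

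\textbf{Main obstacle.} The only delicate point is Kv$^r$-Elim, because it depends on a fresh constant and on reinterpreting that constant. This is already handled by the preceding proposition, so what remains are straightforward checks. The one place needing a little care is SA3 over non-standard models, where we must confirm that the $\mathbb{Z}$-chains contain no cycles; the computation $\mathbf{S}^{\star n}(m,z)=(m,z+n)$ above settles this.
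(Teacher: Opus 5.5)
Your proposal is correct and follows the paper's approach. The paper proves only the validity-preservation of Kv$^r$-Elim explicitly, in the preceding proposition, by reinterpreting the fresh constant globally, and leaves the axioms and the other rules as routine checks over both $(\mathbb{N},\mathbf{S})$ and $(\mathbb{N}^\star,\mathbf{S}^\star)$. Those routine checks are exactly what you carry out, including SA1--SA3 on $\mathbb{N}^\star$, Kv$^r$S via injectivity of the successor, and KVE1/KVE2 via constancy of term values on a $\sim_i$-class.
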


We now explain the intuition behind the axioms and rules of $\sys$. The axioms ID and SI govern equality; from them, we can derive the reflexivity, symmetry, and transitivity of equality. SA1-SA3 come from first-order successor arithmetic. Note that we cannot express the property ``every non-zero number is a successor'' because our language lacks quantifiers.
The axioms KVE1 and KVE2 bridge knowing-value and equality. KVE1 states that if agent $i$ knows the value of $t_1$ and knows that $\varphi$ implies $t_1 \approx t_2$, then agent $i$ also knows the value of $t_2$ given condition $\varphi$. KVE2 states that if agent $i$ knows the values of both $t_1$ and $t_2$, then the agent knows whether they are equal.

A notable feature of the system is the $\text{Kv}^r$-Elim rule. Semantically, the formula $\Kv_i(\varphi,c)$ can be understood as $\exists x \K_i(\varphi \to c \approx x)$. Conceptually, the $\text{Kv}^r$-Elim rule plays a role similar to $\exists$-Elimination in first-order logic, allowing us to safely eliminate the relativized knowing-value operator in derivations. This rule is strong enough to derive standard axioms such as $\text{DISTKv}^r$ and $\text{Kv}^r\vee$ from $\mathbb{ELKV}^r$. Furthermore, we will show that $\text{GKv}^r\vee$ and $\text{GKVE}$---the generalized versions of $\text{Kv}^r\vee$ and $\text{KVE}$---are also derivable in $\sys$ via this rule.

Before deriving theorems in the axiom system,
we provide a general proof method using the $\text{Kv}^r$-Elim rule to derive formulas of the following form:
\( \left( \bigwedge_{k=0}^n \Kv_i(\varphi_k, t_k) \right) \to \psi \).
To prove this using $\text{Kv}^r$-Elim, we only need to show the following implication for $n+1$ fresh, pairwise distinct constants $c_0, \dots, c_n$ that do not occur anywhere in the original theorem ($\varphi_k, t_k,$ or $\psi$):
\[ \left( \bigwedge_{k=0}^n \bigl( \Kv_ic_k \wedge \K_i(\varphi_k \to c_k \approx t_k) \bigr) \right) \to \psi \]
Once this is established, we can repeatedly apply $\text{Kv}^r$-Elim to eliminate the freshly introduced constants $c_k$ one by one, thereby deriving the original theorem. 

The following propositions present several theorems derivable in $\sys$ that will be used in our later completeness proof.
\begin{proposition}
    The symmetry and transitivity of equality are derivable in $\sys$. Moreover, $t_1\approx t_2\to St_1 \approx St_2$ is also derivable.
\end{proposition}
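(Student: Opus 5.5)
The plan is to derive all three facts from ID and SI by choosing suitable atomic formulas $\alpha$. Here $\alpha(t'/t)$ is read as replacing occurrences of $t$ in $\alpha$ by $t'$. For symmetry, I would instantiate SI with $\alpha := t_1\approx t_1$ and substitute $t_2$ for the first occurrence of $t_1$. This gives $t_1\approx t_2\to(t_1\approx t_1\to t_2\approx t_1)$. Combining it with ID ($t_1\approx t_1$) by TAUT and MP yields $t_1\approx t_2\to t_2\approx t_1$.

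For transitivity, I would instantiate SI with the atomic formula $\alpha := t_1\approx t_2$ and the equation $t_2\approx t_3$, replacing $t_2$ by $t_3$. This gives $t_2\approx t_3\to(t_1\approx t_2\to t_1\approx t_3)$, which is propositionally equivalent to $(t_1\approx t_2\wedge t_2\approx t_3)\to t_1\approx t_3$.

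For the congruence fact, I would take $\alpha := St_1\approx St_1$ and replace the second occurrence of $t_1$ inside $St_1$ by $t_2$. SI then gives $t_1\approx t_2\to(St_1\approx St_1\to St_1\approx St_2)$, and ID together with MP finishes the argument.

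The only delicate point is how $\alpha(t'/t)$ is to be read. If substitution is intended to replace all occurrences uniformly, then the symmetry instance yields only $t_2\approx t_2$. In that case I would instead use $\alpha := t_1\approx t_1$, obtain $t_2\approx t_1$ by replacing only the left occurrence, and argue that SI is meant to allow replacing some occurrences. This is the standard reading of Leibniz substitution. Alternatively, if uniform replacement is forced, I would pick $\alpha := t_1 \approx t_1$ with SI applied to the equation $t_1\approx t_2$, where $t'$ is $t_2$ and $t$ is $t_1$. I would also check that the congruence instance replaces $t_1$ only inside the right-hand $St_1$, since replacing both occurrences would trivialize it. I expect confirming that the partial-replacement reading of SI is licensed to be the main (and essentially only) obstacle; everything else is a one-line TAUT/MP manipulation.
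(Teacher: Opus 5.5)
Your three derivations are correct and are the standard ID+SI argument the paper has in mind. The paper gives no separate proof; it only remarks that symmetry and transitivity follow from ID and SI, which presupposes that $\alpha(t'/t)$ may replace selected occurrences of $t$, exactly as you use it in all three instances. The one thing to drop is your ``uniform replacement'' fallback: it is the same SI instance and still yields only $t_2\approx t_2$ (and $St_2\approx St_2$ in the congruence case), so it is not a fallback at all, and the partial-replacement reading is simply the intended one.
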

\begin{proposition}
    \label{prop:derive}
    Both $\text{GKv}^r\vee$ and $\text{GKVE}$ are derivable in the system $\mathbb{ELKVSA}^{r}$. 
    \begin{enumerate}
        \item $\text{GKv}^r\vee$: $\left( \bigwedge_{k=0}^n \Kv_i(\phi_k, t_k) \right) \wedge \left( \bigwedge_{k=0}^{n-1} \Khat_i(\phi_k \wedge \phi_{k+1} \wedge t_k \approx t_{k+1}) \right) \rightarrow \Kv_i\left( \bigvee_{k=0}^n(\phi_k \wedge t \approx t_k), t \right)$
        \item $\text{GKVE}$: $\left( \bigwedge_{k=0}^n \Kv_i(\phi_k, t_k) \right) \wedge \left( \bigwedge_{k=0}^{n-1} \Khat_i(\phi_k \wedge \phi_{k+1} \wedge t_k \approx t_{k+1}) \right) \rightarrow \K_i\left( \phi_0 \wedge \phi_n \to t_0 \approx t_n \right)$
    \end{enumerate}
\end{proposition}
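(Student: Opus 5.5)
We follow the general method described just before the statement. For both items we introduce fresh, pairwise distinct constants $c_0,\dots,c_n$ not occurring in any $\phi_k$, $t_k$ or $t$, and we derive
\[
\Bigl(\bigwedge_{k=0}^n \bigl(\Kv_i c_k \wedge \K_i(\phi_k \to c_k\approx t_k)\bigr)\Bigr) \wedge \Bigl(\bigwedge_{k=0}^{n-1}\Khat_i(\phi_k\wedge\phi_{k+1}\wedge t_k\approx t_{k+1})\Bigr) \to \chi ,
\]
where $\chi$ is the respective consequent. The $\Khat_i$ conjuncts do not mention the $c_k$, so we move them, together with $\chi$, into the right-hand side $\psi$ by propositional reasoning. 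We then apply $\text{Kv}^r$-Elim $n+1$ times, once per $c_k$. The freshness side condition holds at each step, since $c_k$ does not occur in $\phi_k$, $t_k$, or the remaining formula (the other $c_j$ are eliminated in turn, and $c_k$ never appears in $\psi$).

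\textbf{Key step (chaining equalities).} Fix $k<n$. From $\K_i(\phi_k\to c_k\approx t_k)$ and $\K_i(\phi_{k+1}\to c_{k+1}\approx t_{k+1})$, together with the symmetry and transitivity of equality (previous proposition) under NECK and DISTK, we get
\[
\K_i\bigl(\phi_k\wedge\phi_{k+1}\wedge t_k\approx t_{k+1}\to c_k\approx c_{k+1}\bigr).
\]
Combined with $\Khat_i(\phi_k\wedge\phi_{k+1}\wedge t_k\approx t_{k+1})$ by normal modal reasoning, this yields $\Khat_i(c_k\approx c_{k+1})$. Now KVE2, using $\Kv_i c_k$ and $\Kv_i c_{k+1}$, gives $\K_i(c_k\approx c_{k+1})$. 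Iterating transitivity inside $\K_i$ gives $\K_i(c_0\approx c_k)$ for every $k\le n$.

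\textbf{GKVE.} Inside $\K_i$ we reason as follows. Assuming $\phi_0\wedge\phi_n$, we have $t_0\approx c_0$, $c_0\approx c_n$ and $c_n\approx t_n$, hence $t_0\approx t_n$. Thus $\K_i(\phi_0\wedge\phi_n\to t_0\approx t_n)$ follows by DISTK from the known implications and $\K_i(c_0\approx c_n)$.

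\textbf{GKv$^r\vee$.} Again reasoning inside $\K_i$, for each $k$ we have
\[
\phi_k\wedge t\approx t_k \to t\approx c_k \to t\approx c_0 ,
\]
using $\K_i(c_0\approx c_k)$. Hence $\K_i\bigl(\bigvee_k(\phi_k\wedge t\approx t_k)\to c_0\approx t\bigr)$, up to symmetry. Applying KVE1 with $t_1:=c_0$ and $\Kv_i c_0$ then yields $\Kv_i\bigl(\bigvee_k(\phi_k\wedge t\approx t_k),t\bigr)$.

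\textbf{Main obstacle.} The step I expect to be most delicate is the chaining. KVE2 requires $\Khat_i(c_k\approx c_{k+1})$, and this must be extracted from the $\Khat_i$ hypothesis through the two $\K_i$-implications. Beyond that, one must keep the bookkeeping correct so that each $\text{Kv}^r$-Elim application has its constant genuinely absent from the residual $\psi$; the fresh constants must also not occur in $t$. Everything else is normal-modal and equational routine.
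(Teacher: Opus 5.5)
Your proposal is correct and follows essentially the same route as the paper's proof. You use fresh constants $c_0,\dots,c_n$ with $\text{Kv}^r$-Elim, obtain $\Khat_i(c_k\approx c_{k+1})$ from the $\Khat_i$ hypotheses, upgrade it via KVE2 to $\K_i(c_k\approx c_{k+1})$, chain these to $\K_i(c_0\approx c_k)$, and then finish with KVE1 for $\text{GKv}^r\vee$ and equality reasoning for GKVE; your explicit check of the freshness side condition at each of the $n+1$ eliminations is detail the paper leaves implicit.
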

\begin{proof}
    By the general proof method above, we introduce $n+1$ fresh, pairwise distinct constants $c_0, \dots, c_n$ which do not occur in $\phi_k$, $t_k$, or $t$. Let
  $\Gamma := \bigwedge_{k=0}^n \bigl( \Kv_ic_k \wedge \K_i(\phi_k \to c_k \approx t_k) \bigr)$ and 
        $\Delta := \bigwedge_{k=0}^{n-1} \Khat_i(\phi_k \wedge \phi_{k+1} \wedge t_k \approx t_{k+1})$.
To derive the two theorems, we only need to prove that $\Gamma \to (\Delta \to \psi)$ holds in $\mathbb{ELKVSA}^{r}$ for each respective consequent $\psi$. Assume $\Gamma$ and $\Delta$.
    
    First, for each $0\leq k\leq n-1$, from $\K_i(\phi_k \to c_k \approx t_k)$ and $\K_i(\phi_{k+1} \to c_{k+1} \approx t_{k+1})$, we can derive
    \( \K_i \bigl( (\phi_k \wedge \phi_{k+1} \wedge t_k \approx t_{k+1}) \to c_k \approx c_{k+1} \bigr) \).
    Combining this with $\widehat{\K}_i(\phi_k \wedge \phi_{k+1} \wedge t_k \approx t_{k+1})$ from $\Delta$ yields $\widehat{\K}_i(c_k \approx c_{k+1})$. 
    By $\text{KVE2}$ and $\Kv_i c_k$ and $\Kv_i c_{k+1}$, we derive $\K_i(c_k \approx c_{k+1})$ for each $0\le k\le n-1$. 
    By the transitivity of equality inside the $\K_i$ modality, this naturally yields $\K_i(c_0 \approx c_k)$ for all $0 \le k \le n$.
\begin{enumerate}
    \item For $\text{GKv}^r\vee$:  From $\K_i(\phi_k \to c_k \approx t_k)$ in $\Gamma$ and $\K_i(c_0 \approx c_k)$ derived above, we deduce $\K_i(\phi_k \to c_0 \approx t_k)$, which implies $\K_i\bigl( (\phi_k \wedge t \approx t_k) \to c_0 \approx t \bigr)$. Taking the disjunction of the antecedents gives:
    \( \K_i\left( \bigvee_{k=0}^n (\phi_k \wedge t \approx t_k) \to c_0 \approx t \right) \).
    Since we have $\Kv_i c_0$ from $\Gamma$, by $\text{KVE1}$, we can deduce $\Kv_i\left( \bigvee_{k=0}^n (\phi_k \wedge t \approx t_k), t \right)$. 
    \item For $\text{GKVE}$: With $\K_i(c_0 \approx c_n)$ derived, along with $\K_i(\phi_0 \to c_0 \approx t_0)$ and $\K_i(\phi_n \to c_n \approx t_n)$ from $\Gamma$, standard equality reasoning yields $\K_i(\phi_0 \wedge \phi_n \to t_0 \approx t_n)$. 
\end{enumerate}
\end{proof}

\begin{proposition}
    The following theorems can be derived in $\mathbb{ELKVSA}^{r}$:
    \begin{enumerate}
        \item $\text{DISTKv}^r: \K_i(\varphi \to \psi) \to (\Kv_i(\psi, t) \to Kv_i(\varphi, t))$
        \item $\text{Kv}^r\bot: \Kv_i(\bot, t)$
        \item $\text{Kv}^r\vee:\Khat_i(\phi \land \psi) \land Kv_i(\phi, c) \land Kv_i(\psi, c) \to Kv_i(\phi \lor \psi, c)$
        \item $\text{Kv}^r5:\neg Kv_i(\varphi, t) \to \K_i\neg Kv_i(\varphi, t)$
    \end{enumerate}
\end{proposition}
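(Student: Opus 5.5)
We derive each of the four items with the general proof method based on $\text{Kv}^r$-Elim, combined with KVE1, KVE2, Kv$^r$0, Kv$^r$4 and the normal $S5$ reasoning for $\K_i$.

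\textbf{(1) DISTKv$^r$.} We must prove $\K_i(\varphi\to\psi)\wedge\Kv_i(\psi,t)\to\Kv_i(\varphi,t)$, i.e.\ $\Kv_i(\psi,t)\to(\K_i(\varphi\to\psi)\to\Kv_i(\varphi,t))$. Pick a fresh constant $c$ not occurring in $\varphi,\psi,t$. By $\text{Kv}^r$-Elim it suffices to show
\[
\Kv_ic\wedge\K_i(\psi\to c\approx t)\to(\K_i(\varphi\to\psi)\to\Kv_i(\varphi,t)).
\]
From $\K_i(\varphi\to\psi)$ and $\K_i(\psi\to c\approx t)$, DISTK and NECK give $\K_i(\varphi\to c\approx t)$. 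KVE1 with $t_1:=c$ and $t_2:=t$ then yields $\Kv_i(\varphi,t)$.

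\textbf{(2) Kv$^r\bot$.} Kv$^r$0 gives $\Kv_i0$. The formula $\bot\to 0\approx t$ is a tautology, so NECK gives $\K_i(\bot\to 0\approx t)$. KVE1 then gives $\Kv_i(\bot,t)$. No elimination step is needed.

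\textbf{(3) Kv$^r\vee$.} Apply GKv$^r\vee$ of Proposition~\ref{prop:derive} with $n=1$, $\phi_0=\phi$, $\phi_1=\psi$ and $t_0=t_1=t=c$. The antecedent requires $\Khat_i(\phi\wedge\psi\wedge c\approx c)$. We obtain it from $\Khat_i(\phi\wedge\psi)$ using ID ($c\approx c$), NECK and normal-modal reasoning. The conclusion is $\Kv_i((\phi\wedge c\approx c)\vee(\psi\wedge c\approx c),c)$. Since $(\phi\wedge c\approx c)\vee(\psi\wedge c\approx c)\leftrightarrow\phi\vee\psi$ is provable by ID, RE (or DISTKv$^r$ from item 1, in both directions) turns this into $\Kv_i(\phi\vee\psi,c)$.

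\textbf{(4) Kv$^r$5.} We expect this to be the main obstacle, because the axiom list has only positive introspection Kv$^r$4 for $\Kv$. We argue contrapositively and prove $\Khat_i\Kv_i(\varphi,t)\to\Kv_i(\varphi,t)$.

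First, Kv$^r$4 gives $\Kv_i(\varphi,t)\to\K_i\Kv_i(\varphi,t)$. By NECK, DISTK and the derived $\K_i$-form of positive introspection, this yields $\Khat_i\Kv_i(\varphi,t)\to\Khat_i\K_i\Kv_i(\varphi,t)$.

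Second, the S5 theorem $\Khat_i\K_i\chi\to\K_i\chi$ (from 5 and T) gives $\K_i\Kv_i(\varphi,t)$.

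Third, T yields $\Kv_i(\varphi,t)$.

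Taking the contrapositive gives $\neg\Kv_i(\varphi,t)\to\K_i\neg\Kv_i(\varphi,t)$, after rewriting $\neg\Khat_i\Kv_i$ as $\K_i\neg\Kv_i$.

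The delicate point is to verify that the S5 step $\Khat_i\K_i\chi\to\K_i\chi$ really is available. It follows from axiom 5 applied to $\K_i\chi$, contraposed, together with T, so the derivation stays purely modal. If that step causes trouble, the fallback is a $\text{Kv}^r$-Elim argument. That route must manage a fresh $c$ under $\Khat_i$ and likely needs KVE2, so it is less attractive.
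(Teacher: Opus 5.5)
Your proposal is correct and matches the paper. Items 1 and 2 are exactly the paper's $\mathrm{Kv}^r$-Elim/KVE1 and $\mathrm{Kv}^r0$/KVE1 arguments. Item 3 is the $n=1$ instance of $\mathrm{GKv}^r\vee$, which is what the paper's ``similar to the above proposition'' amounts to. Item 4 is the standard purely modal derivation from $\mathrm{Kv}^r4$, T and 5 in Wang and Fan, which the paper cites.

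One small correction: $\Khat_i\K_i\chi\to\K_i\chi$ is just the contrapositive of the instance $\neg\K_i\chi\to\K_i\neg\K_i\chi$ of axiom 5. T is not needed at that step, and the step causes no trouble, so your fallback is unnecessary.
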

\begin{proof}
    We only show the proofs for the first two items. The proof of 3 is similar to the above proposition, and the proof of 4 can be found in \cite{WangFan2014Conditionally}.
    \begin{enumerate}
        \item We rewrite the axiom as $\Kv_i(\psi, t) \to (\K_i(\varphi \to \psi) \to \Kv_i(\varphi, t))$. Using the $\text{Kv}^r$-Elim method, we introduce a fresh constant $c$ and aim to prove:
        \( \bigl( \Kv_i c \wedge \K_i(\psi \to c \approx t) \bigr) \to (\K_i(\varphi \to \psi) \to \Kv_i(\varphi, t)) \).
        Assume $\Kv_i c$, $\K_i(\psi \to c \approx t)$, and $\K_i(\varphi \to \psi)$. From the latter two, we derive $\K_i(\varphi \to c \approx t)$. Then, according to $\text{KVE1}$, $\Kv_i c \wedge \K_i(\varphi \to c \approx t)$ immediately implies $\Kv_i(\varphi, t)$. Applying $\text{Kv}^r$-Elim concludes the proof.
        
        \item By $\text{TAUT}$ and $\text{NECK}$ we get $\K_i(\bot \to 0 \approx t)$. From $\text{Kv}^r0$, we have $\Kv_i 0$. Using $\text{KVE1}$, we derive $\Kv_i(\bot, t)$. 
    \end{enumerate}
\end{proof}

\subsection{\texorpdfstring{Completeness w.r.t. Non-Standard Models}{Completeness w.r.t. Non-Standard Models}}
\label{subsec:comp-nonstandard}

In this subsection, we prove the strong completeness w.r.t. non-standard models. We first consider the classical canonical frame $\mathcal{F}^c = \langle W^c, \{\sim_i^c\}_{i\in I}\rangle$ similar to \cite{Wang2013Knowingthat}:
\begin{itemize}
    \item $W^c= MCS \times \{1,2\}$, where $MCS$ is the set of all maximal consistent sets. 
    \item For any $w,v\in W^c$, $w\sim_i^c v$ iff for any formula $\varphi$, $\K_i\varphi\in w\Rightarrow\varphi\in v$. \footnote{Strictly speaking, $\phi\in w$ means $\phi\in \pi_1(w)$, where $\pi_1$ is the projection function from $W^c$ to $MCS$. We omit $\pi_1$ for simplicity.}
\end{itemize}
Obviously, $\sim_i^c$ is an equivalence relation on $W^c$. The following lemma can be proved by an argument similar to the proof of Proposition 3.10 in \cite{WangFan2014Conditionally}:
    \begin{lemma}
        \label{lem:witness}
        Suppose $\Psi^i_c(w)=\{\psi\mid \Kv_i(\psi,c)\in w\}$. If $\neg \Kv_i(\varphi,c)\in w$, then we can construct $u,v\in MCS$ such that $w\sim_i^c (u,1)$, $w\sim_i^c (v,2)$, $\varphi\in u\cap v$, but $u\cap v\cap \Psi_c^i(w) = \emptyset$.
    \end{lemma}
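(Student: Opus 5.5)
The plan is to follow the proof of Proposition~3.10 in \cite{WangFan2014Conditionally}. We build $u$ and $v$ by Lindenbaum extensions of two carefully chosen consistent sets. Let $K_i^-(w):=\{\chi\mid \K_i\chi\in w\}$. The condition $w\sim_i^c (u,1)$ only depends on $K_i^-(w)\subseteq u$, and likewise for $v$; the tag $1$ or $2$ merely lets us pick two distinct worlds even if $u=v$.

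The disjointness requirement $u\cap v\cap\Psi^i_c(w)=\emptyset$ says: for every $\psi$ with $\Kv_i(\psi,c)\in w$, either $\psi\notin u$ or $\psi\notin v$. The natural way to enforce this is to make $u$ and $v$ ``disagree on $c$''. We pick a fresh-ish strategy by fixing a term enumeration. Note $\Psi^i_c(w)$ is closed under conjunction-weakening by DISTKv$^r$. For each $\psi\in\Psi^i_c(w)$, $\Kv_i(\psi,c)$ together with $\neg\Kv_i(\varphi,c)$ and the derived $\text{Kv}^r\vee$ (or $\text{GKv}^r\vee$) constrain how $\varphi$ and $\psi$ can overlap inside $i$'s view.

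Concretely, I would enumerate $\Psi^i_c(w)=\{\psi_0,\psi_1,\dots\}$ and inductively build two increasing sequences of finite formulas $\alpha_n,\beta_n$ such that, for each $n$, $K_i^-(w)\cup\{\varphi,\alpha_n\}$ and $K_i^-(w)\cup\{\varphi,\beta_n\}$ are consistent. Additionally, the pair is ``jointly unknowable'': $\neg\Kv_i(\varphi\wedge(\alpha_n\vee\beta_n),c)\in w$. This joint invariant is what lets the construction continue. At step $n$, the plan is to add $\neg\psi_n$ to one side. Suppose neither $\alpha_n\wedge\neg\psi_n$ nor $\beta_n\wedge\neg\psi_n$ preserves the invariant. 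Then, using DISTKv$^r$ and $\text{Kv}^r\vee$ with the $\widehat{\K}_i$ overlap witnesses (obtained from consistency), we would derive $\Kv_i(\varphi\wedge(\alpha_n\vee\beta_n),c)\in w$ from $\Kv_i(\psi_n,c)$ and the failures. This contradicts the invariant. Taking Lindenbaum extensions of $K_i^-(w)\cup\{\varphi\}\cup\{\alpha_n\}_n$ and of the $\beta$-side yields $u,v$. Each $\psi_n$ is then excluded from $u$ or $v$, so $\varphi\in u\cap v$ and $u\cap v\cap\Psi^i_c(w)=\emptyset$.

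The main obstacle is the inductive step: showing the invariant can be maintained when splitting on $\psi_n$. This requires a careful case analysis on whether $\widehat{\K}_i(\varphi\wedge\alpha_n\wedge\psi_n)$ and similar overlap formulas are in $w$. It also requires combining $\Kv_i(\psi_n,c)$ with the Kv-facts produced by the failed alternatives via $\text{Kv}^r\vee$ and Kv$^r$5 (to move $\neg\Kv$ inside $\K_i$). I would replicate the four-case argument of \cite{WangFan2014Conditionally}, using that $\mathbb{ELKVSA}^r$ derives all of $\mathbb{ELKV}^r$ (DISTKv$^r$, Kv$^r\bot$, Kv$^r\vee$, Kv$^r$4, Kv$^r$5). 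Atomic formulas are now equalities rather than letters, but this does not affect that argument, since it is purely propositional-modal.
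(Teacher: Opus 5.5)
Your proposal is correct and takes the same route as the paper, which also just transfers Wang and Fan's Proposition~3.10 to $\sys$ via the derivability of DISTKv$^r$, Kv$^r\bot$ and Kv$^r\vee$; your invariant ($\Khat_i(\varphi\wedge\alpha_n),\Khat_i(\varphi\wedge\beta_n)\in w$ together with $\neg\Kv_i(\varphi\wedge(\alpha_n\vee\beta_n),c)\in w$) is the right one. For the record, the four failure cases close using only DISTKv$^r$ and Kv$^r\vee$, splitting on whether $\Khat_i$ of the overlap of the two failed alternatives lies in $w$; the base case needs Kv$^r\bot$ to obtain $\Khat_i\varphi\in w$, and Kv$^r$5 is not needed.
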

However, as shown in \cite{Wang2013Knowingthat}, the canonical frame $\mathcal{F}^c$ is not sufficient for the multi-agent completeness proof. Wang and Fan give an insightful but rather involved solution in \cite{WangFan2014Conditionally}. Here we use a different and more direct method: we construct a tree-like frame to facilitate the valuation. The proof is divided into five steps. 
   
\medskip\noindent\textbf{Step 1: Construction of the Tree-like Frame.}
    Given a consistent set of formulas $\Gamma_0$, by Lindenbaum's lemma, we can extend it to a maximal consistent set $\Gamma$. Let $w_0 = (\Gamma, 1)$.
    Construct $\langle W',\{\rightarrow_j\}_{j\in I}\rangle$ level by level. The elements of $W'$ are histories.
       A history is a sequence $\vec{h} = \langle w_0, i_1, w_1, \dots, i_k, w_k\rangle$ where $w_{m-1} \sim_{i_m}^c w_m$ in $\mathcal{F}^c$. Define $\Last(\vec{h}) = w_k$, $\hgt(\vec{h}) = k$, and $\Ag(\langle w_0\rangle)=\star\notin \mathbf{I}$; for $k>0$, $\Ag(\vec{h}) = i_k$.
    In the following, for $d=1,2$,
     $\vec{h}\cdot(j,(w,d))$ denotes the history obtained by appending the agent label $j$ and the canonical world $(w,d)$ to $\vec{h}$. When $\vec{h}'$ is added as a direct $j$-witness for $\vec{h}$, we put $\vec{h}\rightarrow_j\vec{h}'$; the S5 closure is taken in Step 2.
    Let $H_0 = \{\langle w_0\rangle\}$ be the root of the tree. 
    Suppose $H_k$ has been constructed. We construct $H_{k+1}$ as follows.
 For each $\vec{h} \in H_k$ and $j \in I$:
    \begin{enumerate}
        \item If $j = \Ag(\vec{h})$: Skip. The parent of $\vec{h}$ has already constructed the required witnesses for this case, and these will remain accessible to $\vec{h}$ under the S5 closure.
        \item If $j \neq \Ag(\vec{h})$:
        \begin{itemize}
            \item For $\neg \K_j \psi\in \Last(\vec{h})$, by Existence Lemma, we can construct a MCS $w$ such that $\neg\psi\in w$ and $\Last(\vec{h})\sim_j^c w$. Let $\vec{h}' = \vec{h} \cdot (j, (w,1))$ be the new history. Add $\vec{h}'$ to $H_{k+1}$ and let $\vec{h}\rightarrow_j \vec{h}'$.
            \item For $\neg \Kv_j (\phi, t) \in \Last(\vec{h})$: by $\text{Kv}^r\text{S}$, we need only consider the case in which $t$ is a constant symbol $c$. By Lemma \ref{lem:witness}, we can construct $u,v\in MCS$ such that $\Last(\vec{h})\sim_j^c (u,1)$, $\Last(\vec{h})\sim_j^c (v,2)$, $\phi\in u \cap v$ but $u \cap v \cap \Psi^{j}_{c}(\Last(\vec{h})) = \emptyset$. 
            Let $\vec{h}' = \vec{h} \cdot (j, (u,1))$ and $\vec{h}'' = \vec{h} \cdot (j, (v,2))$ be the new histories.
            Add them to $H_{k+1}$ and let $\vec{h}\rightarrow_j \vec{h}'$ and $\vec{h}\rightarrow_j \vec{h}''$.
        \end{itemize}
    \end{enumerate}
    Let $W' = \bigcup_{k\in \mathbb{N}} H_k$. We show that $W'$ is countable. Fix G\"odel codes for formulas and hence for finite sequences of formulas together with labels from $\{1,2\}$. In the construction above, whenever witnesses are required, they are constructed by a fixed deterministic Lindenbaum-style procedure. Each history has at least one construction record, which is a finite sequence: the root is recorded by $1$, and a history of height $k>0$ is recorded as
    \[
        (1,(\theta_1,d_1),\dots,(\theta_k,d_k)),
    \]
    where each $\theta_m$ is the formula whose requirement caused the $m$th successor world to be constructed, and $d_m\in\{1,2\}$ is the corresponding label. A history may have more than one such record, since the same world may witness several formulas. For each history $\vec{h}$, let $m(\vec{h})$ be the least G\"odel code among its construction records. Since a construction record determines the whole history, if $m(\vec{h})=m(\vec{r})$, then $\vec{h}=\vec{r}$. Hence $m$ is injective, and therefore $W'$ is countable.

    \medskip\noindent\textbf{Step 2: $S5$ Closure.}
    Define $\mathcal{F}' = \langle W', \{\sim'_j\}_{j\in I} \rangle$, where $\sim'_j$ is the reflexive, symmetric, transitive closure of $\rightarrow_j$. Due to the construction in Step 1, for each $j\in I$, there are no paths of the form $\vec{h_1} \rightarrow_ j \vec{h_2} \rightarrow_j \vec{h_3}$. Consequently, the structure of $\sim'_j$ equivalence classes is restricted. 
   Each class contains exactly one node $\vec{r}$ with $\Ag(\vec{r}) \neq j$, and all other nodes in the class are immediate $j$-successors of $\vec{r}$.
   Moreover, $\sim'_j$ has the following property.
   \begin{proposition}
    \label{prop:imply}
     If $\vec{h}\sim'_j \vec{r}$, then $\Last(\vec{h})\sim_j^c \Last(\vec{r})$ in the canonical frame $\mathcal{F}^c$, and consequently $\Last(\vec{h})$ and $\Last(\vec{r})$ contain the same $\K_j$-formulas and the same $\Kv_j$-formulas.
   \end{proposition}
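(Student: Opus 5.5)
The plan is to reduce the claim to single $\rightarrow_j$ edges, using the fact that $\sim'_j$ is the reflexive, symmetric, transitive closure of $\rightarrow_j$, and then invoke transitivity and symmetry of $\sim^c_j$ in the canonical frame. Concretely, if $\vec{h}\sim'_j\vec{r}$, there is a finite chain $\vec{h}=\vec{h}_0,\vec{h}_1,\dots,\vec{h}_m=\vec{r}$ in which each consecutive pair is related by $\rightarrow_j$ in one direction or the other. (By the structural remark in Step 2, the chain has length at most two, passing through the unique root $\vec{r}_0$ of the class with $\Ag(\vec{r}_0)\neq j$, but an arbitrary finite chain works equally well.)

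The base case is a single edge $\vec{h}\rightarrow_j\vec{h}'$. By Step 1, such an edge is created only when $\vec{h}'=\vec{h}\cdot(j,(w,d))$, where $(w,d)$ was chosen either by the Existence Lemma or by Lemma~\ref{lem:witness}. In both cases the construction guarantees $\Last(\vec{h})\sim^c_j (w,d)=\Last(\vec{h}')$; this is also built into the definition of a history, since $w_{m-1}\sim^c_{i_m} w_m$ with $i_m=j$. The reflexive case is trivial. Because $\sim^c_j$ is an equivalence relation (noted right after the definition of $\mathcal{F}^c$, using T, 4, 5), the relation pulls back along reversed edges and composes along the chain, which yields $\Last(\vec{h})\sim^c_j\Last(\vec{r})$.

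For the second part, suppose $w\sim^c_j v$ and $w\sim^c_j$ is symmetric. If $\K_j\varphi\in w$, then by axiom 4 we have $\K_j\K_j\varphi\in w$, so $\K_j\varphi\in v$. The converse direction follows by symmetry, or directly: if $\neg\K_j\varphi\in w$, axiom 5 gives $\K_j\neg\K_j\varphi\in w$, hence $\neg\K_j\varphi\in v$. For $\Kv_j$-formulas the argument is the same. Axiom $\text{Kv}^r4$ transfers $\Kv_j(\varphi,t)$ from $w$ to $v$. The derived theorem $\text{Kv}^r5$ transfers $\neg\Kv_j(\varphi,t)$, and with maximality this gives $\Kv_j(\varphi,t)\in w$ iff $\Kv_j(\varphi,t)\in v$.

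No step here is a real obstacle. The only point needing care is confirming that every $\rightarrow_j$ edge introduced in Step 1 (both the $\neg\K_j$ witnesses and the two $\neg\Kv_j$ witnesses $(u,1)$ and $(v,2)$) really lies in $\sim^c_j$. This follows directly from the Existence Lemma and from the statement of Lemma~\ref{lem:witness}, together with the observation that the relevant membership conditions depend only on the MCS component $\pi_1$ and not on the label $d$.
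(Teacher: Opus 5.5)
Your proposal is correct and follows essentially the same route as the paper: the paper notes that the relation $\vec{x}\,R\,\vec{y} \iff \Last(\vec{x})\sim^c_j\Last(\vec{y})$ is an equivalence relation containing $\rightarrow_j$, hence containing its equivalence closure $\sim'_j$ (your explicit chain argument is this unfolded), and then transfers $\K_j$- and $\Kv_j$-formulas using axiom 4, $\mathrm{Kv}^r4$ and the symmetry of $\sim^c_j$. Your use of axiom 5 and the derived $\mathrm{Kv}^r5$ for the negative direction, instead of symmetry, works equally well.
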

  \begin{proof}
     By the construction, if $\vec{h} \rightarrow_j \vec{r}$, then $\Last(\vec{h})\sim_j^c \Last(\vec{r})$.     
     Let $R$ be a binary relation on $W'$ defined by $\vec{x} R \vec{y}$ iff $\Last(\vec{x})\sim_j^c \Last(\vec{y})$. We have $\rightarrow_j \subseteq R$, and $R$ is an equivalence relation on $W'$. Then $\sim'_j\subseteq R$ . The sharing of $\K_j$-formulas follows from axiom 4 and the symmetry of $\sim_j^c$; similarly, the sharing of $\Kv_j$-formulas follows from $\mathrm{Kv}^r4$ and the symmetry of $\sim_j^c$.

  \end{proof}
\noindent\textbf{Step 3: Preparation for Valuation.}
Before defining the valuation function, we first need to determine which constant-history pairs must be placed in the same $\mathbb{Z}$-chain (or in the natural-number part). To this end, we define relations $R_i$ on $(\mathbf{C}\cup \{0\})\times W'$. $(c,\vec{h})R_i(d,\vec{r})\footnote{Note that $c$ and $d$ may also represent $0$.}$ iff one of the following conditions holds:
\begin{itemize}
    \item Type 1: $\vec{h}=\vec{r}$ and there exist $a,b\in\mathbb{N}$ such that $S^a c \approx S^b d \in \Last(\vec{h})$. In this case, we write $(c,\vec{h})\overset{a-b}{\Longrightarrow}(d,\vec{h})$ and also $(d,\vec{h})\overset{b-a}{\Longrightarrow}(c,\vec{h})$.
    \item Type 2: $c$ and $d$ are the same symbol, $\vec{h}\sim_i' \vec{r}$ and there exists a formula $\varphi$ such that $\Kv_i(\varphi, c)\wedge \varphi\in \Last(\vec{h})\cap \Last(\vec{r})$. In this case, $(c,\vec{h})$ and $(d,\vec{r})$ must receive the same value. We write $(c,\vec{h})\overset{\varphi}{\Longrightarrow}(c,\vec{r})$, 
    and also $(c,\vec{r})\overset{0}{\Longrightarrow}(c,\vec{h})$ when the distance is important.
\end{itemize}
Let $\equiv_i$ be the transitive closure of $R_i$. It is an equivalence relation on $(\mathbf{C}\cup \{0\})\times W'$. $(c,\vec{h})\equiv_i (d,\vec{r})$ implies that $\vec{h}\sim_i' \vec{r}$ and thus $\Last(\vec{h})\sim_i^c \Last(\vec{r})$.
By arithmetic axioms, $(c,\vec{h})\overset{a}{\Longrightarrow}(d,\vec{h})\overset{b}{\Longrightarrow}(e,\vec{h})$ implies $(c,\vec{h})\overset{a+b}{\Longrightarrow}(e,\vec{h})$. 
If $(c,\vec{h})\overset{\varphi}{\Longrightarrow}(c,\vec{r})\overset{\psi}{\Longrightarrow}(c,\vec{u})$, then $\Khat_i(\varphi\wedge\psi)\wedge \Kv_i(\varphi,c)\wedge \Kv_i(\psi,c)\in \Last(\vec{h})\cap \Last(\vec{r})\cap \Last(\vec{u})$; by $\mathrm{Kv}^r\vee$, this implies $\Kv_i(\varphi\vee\psi,c)\wedge(\varphi\vee\psi)\in \Last(\vec{h})\cap \Last(\vec{u})$. 
Thus $(c,\vec{h})\overset{\varphi\vee\psi}{\Longrightarrow}(c,\vec{u})$. Now it is time to show the power of $\mathrm{GKv}^r\vee$ and $\mathrm{GKVE}$.
\begin{lemma}
    \label{lemm: loop}
    For any $(c,\vec{h}),(d,\vec{h})\in (\mathbf{C}\cup \{0\})\times W'$, if $(c,\vec{h})\equiv_i (d,\vec{h})$, i.e. there exist $(c_0,\vec{h}_0),\dots,(c_n,\vec{h}_n)$ such that $(c,\vec{h})=(c_0,\vec{h}_0)\overset{a_1}{\Longrightarrow}(c_1,\vec{h}_1)\overset{a_2}{\Longrightarrow}\cdots \overset{a_n}{\Longrightarrow}(c_n,\vec{h}_n)=(d,\vec{h})$, where $a_1,\dots,a_n\in \mathbb{Z}$ and $\sum_k a_k = a$, then $S^a c \approx d \in \Last(\vec{h})$ if $a\ge 0$, and $c \approx S^{-a} d \in \Last(\vec{h})$ if $a<0$.
\end{lemma}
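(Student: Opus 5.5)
We would prove the lemma by collapsing the chain into alternating blocks and then applying $\mathrm{GKVE}$ once to the whole chain, after shifting all terms to a common level with $\mathrm{Kv}^r\mathrm{S}$.

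\textbf{Normalising the chain.} Each Type 1 step stays inside one history and carries an integer offset. Each Type 2 step keeps the constant fixed, moves between $\sim_i'$-related histories, and carries offset $0$ together with a witnessing formula. Consecutive Type 1 steps at the same history compose into one step whose offset is the sum; this uses the remark after the definition of $\equiv_i$, which relies on SA2, SI, symmetry/transitivity of equality and $t_1\approx t_2\to St_1\approx St_2$. Consecutive Type 2 steps compose via $\mathrm{Kv}^r\vee$, as already observed before the lemma. So we may assume the chain has the form
\[
(e_0,\vec g_0)\overset{b_0}{\Longrightarrow}(e_0',\vec g_0)\overset{\varphi_1}{\Longrightarrow}(e_0',\vec g_1)\overset{b_1}{\Longrightarrow}(e_1',\vec g_1)\overset{\varphi_2}{\Longrightarrow}\cdots\overset{\varphi_m}{\Longrightarrow}(e_{m-1}',\vec g_m)\overset{b_m}{\Longrightarrow}(e_m',\vec g_m),
\]
with $\vec g_0=\vec g_m=\vec h$, $e_0=c$, $e_m'=d$ and $\sum b_k=a$. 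Some Type 1 blocks may have offset $0$ and be trivial. If $m=0$, the claim is just the Type 1 composition, so we assume $m\ge 1$.

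\textbf{Setting up $\mathrm{GKVE}$.} For $1\le k\le m$ we have $\Kv_i(\varphi_k,e_{k-1}')\wedge\varphi_k\in\Last(\vec g_{k-1})\cap\Last(\vec g_k)$. By Proposition~\ref{prop:imply}, every $\vec g_k$ is $\sim_i'$-related to $\vec h$, so $\Last(\vec g_k)\sim_i^c\Last(\vec h)$, and all these $\Kv_i$-formulas lie in $\Last(\vec h)$. Let $B_k=b_0+\dots+b_{k-1}$ be the cumulative offset before the $k$-th Type 2 step, and fix $N$ larger than every $|B_k|$ and $|a|$. Put $t_k:=S^{N-B_k}e_{k-1}'$. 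By repeated $\mathrm{Kv}^r\mathrm{S}$, $\Kv_i(\varphi_k,t_k)\in\Last(\vec h)$.

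At $\vec g_k$ (for $1\le k<m$) the block offset $b_k$ gives $S^{b_k}e_{k-1}'\approx e_k'$ or its mirror form. Applying $S$ to both sides $N-B_{k+1}$ times (or cancelling with SA2 when needed) yields $t_k\approx t_{k+1}\in\Last(\vec g_k)$. Together with $\varphi_k,\varphi_{k+1}\in\Last(\vec g_k)$ and $\Last(\vec h)\sim_i^c\Last(\vec g_k)$, this gives $\Khat_i(\varphi_k\wedge\varphi_{k+1}\wedge t_k\approx t_{k+1})\in\Last(\vec h)$. Now $\mathrm{GKVE}$ from Proposition~\ref{prop:derive} yields $\K_i(\varphi_1\wedge\varphi_m\to t_1\approx t_m)\in\Last(\vec h)$. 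Since $\varphi_1,\varphi_m\in\Last(\vec h)$, axiom T gives $S^{N}e_0'\approx S^{N-B_m}e_{m-1}'\in\Last(\vec h)$.

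\textbf{Finishing at $\vec h$.} Combine this with the first and last Type 1 blocks at $\vec h$: $S^{b_0}c\approx e_0'$ and $S^{b_m}e_{m-1}'\approx d$, each up to mirror form. Shifting both sides with $S$ and using SI and transitivity gives $S^{N+b_0}c\approx S^{N-B_m-b_m}d$, the second exponent being $N-a+b_0$. Cancelling $S$'s with SA2 then yields $S^a c\approx d$ when $a\ge 0$ and $c\approx S^{-a}d$ when $a<0$.

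The main obstacle is this offset bookkeeping: choosing $N$ so that every exponent is nonnegative, and handling the mirror forms of negative Type 1 offsets uniformly. We would isolate a small auxiliary claim: if $S^x u\approx S^y v$ is in an MCS, then $S^{x+z}u\approx S^{y+z}v$ is too, and $S^{x-z}u\approx S^{y-z}v$ is too when $x,y\ge z$. This follows by induction from SA2 and $t_1\approx t_2\to St_1\approx St_2$, and all the shifting steps above reduce to it.
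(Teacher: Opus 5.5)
Your proposal is correct and follows essentially the paper's own route: normalize the chain into alternating blocks, lift all terms by a common power of $S$ via $\mathrm{Kv}^r\mathrm{S}$, transfer the $\Kv_i$ and $\Khat_i$ facts to $\Last(\vec h)$, apply $\mathrm{GKVE}$ once, and cancel with SA2. The only differences are that the paper strips the outer Type 1 steps and uses $K=\min_k s_k$, where you keep possibly trivial outer blocks and a large $N$. One harmless indexing slip: with $B_1=b_0$ as you define it, $t_1=S^{N-b_0}e_0'$ rather than $S^{N}e_0'$, so the equation at $\vec h$ should read $S^{N-b_0}e_0'\approx S^{N-B_m}e_{m-1}'$ and the combined identity is $S^{N}c\approx S^{N-a}d$; your choice of $N$ already keeps all exponents there nonnegative.
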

\begin{proof}
    We first normalize the witnessing chain. Each step is either:
    (i) a type 1 step in a single world node, written $(c,\vec{r})\overset{a}{\Longrightarrow}(d,\vec{r})$, or
    (ii) a type 2 step between $i$-equivalent world nodes, written $(c,\vec{r})\overset{\varphi}{\Longrightarrow}(c,\vec{u})$.
    Two consecutive type 1 steps can be merged into one by arithmetic axioms, and two consecutive type 2 steps can be merged into one by $\mathrm{Kv}^r\vee$. 
    If there are only type 1 steps, then the conclusion follows immediately from arithmetic axioms. 
    Now we present the case where the chain starts and ends with a type 2 step; if it starts (or ends) with a type 1 step, simply remove that initial (or terminal) type 1 step, apply the argument below, and then reattach it using arithmetic axioms. Thus we may assume an alternating chain of the form
    \[
    (c_0,\vec{h}_0)\overset{\phi_0}{\Longrightarrow}(c_0,\vec{h}_1)\overset{a_0}{\Longrightarrow}(c_1,\vec{h}_1)\overset{\phi_1}{\Longrightarrow}(c_1,\vec{h}_2)\overset{a_1}{\Longrightarrow}\cdots\overset{a_{n-1}}{\Longrightarrow}(c_n,\vec{h}_n)\overset{\phi_n}{\Longrightarrow}(c_n,\vec{h}_0),
    \]
    where $a_0,\dots,a_{n-1}\in \mathbb{Z}$, $\phi_0,\dots,\phi_n$ are formulas, and $\sum_{i=0}^{n-1} a_i=a$. Note that $\vec{h}_0=\vec{h}$ and all $\vec{h}_s$ are $\sim'_i$-equivalent. Then all the $\Last(\vec{h}_s)$ are $\sim_i^c$-equivalent, and thus share the same 
    $\K_i$-formulas and $\Kv_i$-formulas. 

    \noindent Let $s_k:=\sum_{x=k}^{n-1} a_x$ for $k=0,\dots,n$, and let $K:=\min\{s_k\mid 0\le k\le n\}$. Define
    \( t_k:= S^{s_k-K} c_k. \)
    Then all exponents are non-negative. Since $s_n=0$, we have $t_n=S^{-K}c_n$, and $t_0=S^{a-K}c_0$.

    For each $k<n$, the type 1 step $(c_k,\vec{h}_{k+1})\overset{a_k}{\Longrightarrow}(c_{k+1},\vec{h}_{k+1})$ yields $t_k\approx t_{k+1}\in \Last(\vec{h}_{k+1})$ by applying successor on both sides the appropriate number of times. Also, the type 2 step $(c_k,\vec{h}_k)\overset{\phi_k}{\Longrightarrow}(c_k,\vec{h}_{k+1})$ gives $\phi_k\in \Last(\vec{h}_k)\cap \Last(\vec{h}_{k+1})$ and $\Kv_i(\phi_k,c_k)\in \Last(\vec{h}_k)\cap \Last(\vec{h}_{k+1})$.
    By $\mathrm{Kv}^r\mathrm{S}$, $\Kv_i(\phi_k,t_k)\in \Last(\vec{h}_k)\cap \Last(\vec{h}_{k+1})$.
     Additionally, $\phi_n\wedge \Kv_i(\phi_n,t_n)\in \Last(\vec{h}_n)\cap \Last(\vec{h}_0)$. 
    Since all the $\Last(\vec{h}_s)$ share the same $\Kv_i$-formulas, $\Kv_i(\phi_k,t_k)\in \Last(\vec{h}_0)$ for each $k=0,\dots,n$.
Moreover, each $\Last(\vec{h}_{k+1})$ contains $\phi_k\wedge\phi_{k+1}\wedge t_k\approx t_{k+1}$ for $k<n$, so $\widehat{K}_i(\phi_k\wedge\phi_{k+1}\wedge t_k\approx t_{k+1})\in \Last(\vec{h}_0)$ for all $k<n$. Therefore, the premises of $\mathrm{GKVE}$ hold in $\Last(\vec{h}_0)$, and we obtain
    \( \K_i(\phi_0\wedge\phi_n\to t_0\approx t_n)\in \Last(\vec{h}_0). \)
    Since $\phi_0\wedge\phi_n\in \Last(\vec{h}_0)$, we get $t_0\approx t_n\in \Last(\vec{h}_0)$.
Finally, $t_0\approx t_n$ is $S^{a-K}c\approx S^{-K}d$. Cancelling $S^{-K}$ on both sides using arithmetic axioms repeatedly yields $S^a c\approx d$ when $a\ge 0$, and $c\approx S^{-a} d$ when $a<0$. Hence the lemma holds.
\end{proof}

\begin{corollary}
    \label{coro:loop}
    For any loop of the form $(c,\vec{h})=(c_0,\vec{h}_0)\overset{a_1}{\Longrightarrow}(c_1,\vec{h}_1)\overset{a_2}{\Longrightarrow}\cdots \overset{a_n}{\Longrightarrow}(c_n,\vec{h}_n)=(c,\vec{h})$, where $a_1,\dots,a_n\in \mathbb{Z}$, we have $\sum_k a_k = 0$. 
    Consequently, if there are two paths from $(c,\vec{h})$ to $(d,\vec{r})$, one with total weight $a$ and the other with total weight $b$, then $a=b$.
\end{corollary}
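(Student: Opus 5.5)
The corollary should follow directly from Lemma~\ref{lemm: loop}, applied with $d=c$ and $\vec{h}$ as both endpoints. Let $a:=\sum_k a_k$ and suppose, for contradiction, that $a\neq 0$. If $a>0$, the lemma gives $S^a c\approx c\in\Last(\vec{h})$. But SA3 states $\neg S^a c\approx c$ for $a\ge 1$, and every maximal consistent set contains all instances of SA3. So $\Last(\vec{h})$ contains both a formula and its negation, contradicting consistency. If $a<0$, the lemma gives $c\approx S^{-a}c\in\Last(\vec{h})$ with $-a\ge 1$. Symmetry of equality (derivable, as shown earlier) turns this into $S^{-a}c\approx c$, and SA3 again yields a contradiction. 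Hence $a=0$.

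Two small points need care. First, the lemma allows a step's weight to be a formula label $\varphi$ (a type 2 step), which counts as $0$. Second, a step can be traversed in the reverse direction. Both are already covered by the conventions stated in Step~3: each type 1 edge comes with its reversed edge of weight $b-a$, and each type 2 edge with its reversed edge of weight $0$. So any loop in $\equiv_i$ is a chain of the form the lemma treats, and no further normalization is needed.

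For the second claim, take a path $P$ from $(c,\vec{h})$ to $(d,\vec{r})$ of total weight $a$, and a path $Q$ between the same points of total weight $b$. Reverse $Q$ by replacing each step with its reversed step, which negates every weight (type 2 weights stay $0$). This gives a path from $(d,\vec{r})$ to $(c,\vec{h})$ of weight $-b$. Concatenating $P$ with reversed $Q$ produces a loop at $(c,\vec{h})$ of weight $a-b$. By the first part, $a-b=0$.

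I expect no real obstacle here. The only step needing attention is checking that reversal exactly negates weights, which holds by the definition of $\overset{a-b}{\Longrightarrow}$ and its converse $\overset{b-a}{\Longrightarrow}$.
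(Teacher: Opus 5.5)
Your proof is correct and follows the paper's own argument: apply the preceding lemma with $d=c$ and the same endpoint $\vec{h}$, then note that the resulting equation contradicts the SA3 instance $\neg S^n c\approx c$ in the maximal consistent set $\Last(\vec{h})$. You also make explicit two points the paper leaves implicit. One is the use of symmetry of equality in the case $\sum_k a_k<0$. The other is the reverse-and-concatenate argument that gives the ``consequently'' clause.
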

\begin{proof}
    By the above lemma, we have $S^{\sum_k a_k} c \approx c \in \Last(\vec{h})$ if $\sum_k a_k \ge 0$ and $c \approx S^{-\sum_k a_k} c \in \Last(\vec{h})$ if $\sum_k a_k < 0$. 
    But we also have $\neg S^n c\approx c\in \Last(\vec{h})$ for $n>0$. Since $\Last(\vec{h})$ is consistent, we have $\sum_k a_k = 0$.
\end{proof}
\begin{lemma}
    \label{lemm:loop2}
    For any $(c,\vec{h}),(c,\vec{r})\in (\mathbf{C}\cup \{0\})\times W'$, if $(c,\vec{h})\equiv_i (c,\vec{r})$, i.e. there exist $(c_0,\vec{h}_0),\dots,(c_n,\vec{h}_n)$ such that $(c,\vec{h})=(c_0,\vec{h}_0)\overset{a_1}{\Longrightarrow}(c_1,\vec{h}_1)\overset{a_2}{\Longrightarrow}\cdots \overset{a_n}{\Longrightarrow}(c_n,\vec{h}_n)=(c,\vec{r})$, where $a_1,\dots,a_n\in \mathbb{Z}$ and $\sum_k a_k = 0$, then $(c,\vec{h})\overset{0}{\Longrightarrow}(c,\vec{r})$, i.e. if $\vec{h}\neq \vec{r}$, then there exists a formula $\varphi$ such that $\Kv_i(\varphi, c)\wedge \varphi\in \Last(\vec{h})\cap \Last(\vec{r})$.
\end{lemma}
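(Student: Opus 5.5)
The plan is to build, from the chain, one disjunctive condition for which agent $i$ knows the value of $c$, using $\mathrm{GKv}^r\vee$ in the same way that Lemma~\ref{lemm: loop} used $\mathrm{GKVE}$. Assume $\vec{h}\neq\vec{r}$. A type 1 step never changes the world node, so the chain contains at least one type 2 step. As in the proof of Lemma~\ref{lemm: loop}, merge consecutive type 1 steps using the arithmetic axioms and consecutive type 2 steps using $\mathrm{Kv}^r\vee$. This gives a normalized chain
\[
(c,\vec{h}_0)\overset{b}{\Longrightarrow}(c_0,\vec{h}_0)\overset{\phi_0}{\Longrightarrow}(c_0,\vec{h}_1)\overset{a_0}{\Longrightarrow}(c_1,\vec{h}_1)\overset{\phi_1}{\Longrightarrow}\cdots\overset{\phi_n}{\Longrightarrow}(c_n,\vec{h}_{n+1})\overset{b'}{\Longrightarrow}(c,\vec{h}_{n+1}),
\]
where $\vec{h}_0=\vec{h}$ and $\vec{h}_{n+1}=\vec{r}$. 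An initial or terminal type 1 step of weight $0$ with $c_0=c$ covers the case where there is none. All $\vec{h}_s$ are $\sim'_i$-equivalent, so by Proposition~\ref{prop:imply} their last worlds share their $\K_i$- and $\Kv_i$-formulas. The total weight is $b+\sum_k a_k+b'=0$.

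Next fix the exponents. Define offsets $s_k$ as the accumulated weight from $(c_k,\cdot)$ to the endpoint $(c,\vec{r})$. Set $s_k:=\sum_{x\ge k}a_x+b'$ for $k\le n$, and let $K:=\min(\{s_k\}\cup\{0\})$. Put $t_k:=S^{s_k-K}c_k$ and $t:=S^{-K}c$, so all exponents are non-negative. Each internal type 1 step gives $t_k\approx t_{k+1}\in\Last(\vec{h}_{k+1})$, exactly as in Lemma~\ref{lemm: loop}. The terminal step gives $t_n\approx t\in\Last(\vec{r})$ after applying $S$ to both sides. The initial step gives $S^{b}c\approx c_0\in\Last(\vec{h})$ (or its mirror form). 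Since the total weight is $0$, we have $s_0=-b$, and therefore $t\approx t_0\in\Last(\vec{h})$. The type 2 steps give $\phi_k\wedge\Kv_i(\phi_k,c_k)$ in $\Last(\vec{h}_k)\cap\Last(\vec{h}_{k+1})$. By $\mathrm{Kv}^r\mathrm{S}$ this becomes $\Kv_i(\phi_k,t_k)$, and by sharing it holds in every $\Last(\vec{h}_s)$. As before, $\Khat_i(\phi_k\wedge\phi_{k+1}\wedge t_k\approx t_{k+1})$ holds everywhere, witnessed by $\Last(\vec{h}_{k+1})$.

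Now apply $\mathrm{GKv}^r\vee$ with the term $t$. It yields $\Kv_i(\theta,t)$ for
\[
\theta:=\bigvee_{k=0}^n(\phi_k\wedge t\approx t_k).
\]
By $\mathrm{Kv}^r\mathrm{S}$, applied $-K$ times, this gives $\Kv_i(\theta,c)$, and by sharing it lies in both $\Last(\vec{h})$ and $\Last(\vec{r})$. It remains to show $\theta\in\Last(\vec{h})\cap\Last(\vec{r})$. In $\Last(\vec{h})$ the disjunct $\phi_0\wedge t\approx t_0$ holds, and in $\Last(\vec{r})$ the disjunct $\phi_n\wedge t\approx t_n$ holds. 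Taking $\varphi:=\theta$ proves the lemma.

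I expect the main obstacle to be the exponent bookkeeping at the two endpoints. The offsets must be chosen so that a single term $t=S^{-K}c$ is provably equal to $t_0$ in $\Last(\vec{h})$ and to $t_n$ in $\Last(\vec{r})$. This is exactly where the hypothesis $\sum_k a_k=0$ enters. Including $0$ in the minimum defining $K$ keeps every exponent non-negative. After that, deriving the required equalities is routine cancellation with SA2 together with the derivable $t_1\approx t_2\to St_1\approx St_2$.
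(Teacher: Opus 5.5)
Your proposal is correct and follows essentially the same route as the paper: normalize to an alternating chain that begins and ends with type 1 steps, shift each constant by $S^{s_k-K}$ so that consecutive terms are provably equal, and apply $\mathrm{GKv}^r\vee$ with the single term $S^{-K}c$ to get the disjunctive witness $\theta$. The only difference is notational: your separate endpoint term $t$ and the explicit $0$ in the minimum play the roles of the paper's $t_0$ and $s_0=0$.
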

\begin{proof}
    We first normalize the path as before. We present the case where the chain starts and ends with a type 1 step.
    If the chain starts or ends with a type 2 step, insert the trivial type 1 step of weight $0$ (using $c\approx c$) so that the chain starts and ends with type 1 steps. Thus we may assume the chain has the alternating form
    \[
    (c_0,\vec{h}_0)\overset{a_0}{\Longrightarrow}(c_1,\vec{h}_0)\overset{\phi_1}{\Longrightarrow}(c_1,\vec{h}_1)\overset{a_1}{\Longrightarrow}(c_2,\vec{h}_1)\overset{\phi_2}{\Longrightarrow}\cdots\overset{\phi_n}{\Longrightarrow}(c_n,\vec{h}_n)\overset{a_n}{\Longrightarrow}(c_0,\vec{h}_n),
    \]
    where $a_0,\dots,a_n\in \mathbb{Z}$, $\phi_1,\dots,\phi_n$ are formulas, and $\sum_{k=0}^n a_k=0$. Note that all $\vec{h}_k$ are $\sim'_i$-equivalent.

    Let $s_k:=\sum_{x=k}^n a_x$ for $k=0,\dots,n$, and let $K:=\min\{s_k\mid 0\le k\le n\}$. Define
    \( t_k:= S^{s_k-K} c_k. \)
    Then all exponents are non-negative. Since $s_0=0$, we have $t_0=S^{-K}c_0$. $(c_n,\vec{h}_n)\overset{a_n}{\Longrightarrow}(c_0,\vec{h}_n)$ implies $t_n\approx t_0\in \Last(\vec{h}_n)$.
    For each $k<n$, the type 1 step $(c_k,\vec{h}_k)\overset{a_k}{\Longrightarrow}(c_{k+1},\vec{h}_k)$ implies $t_k\approx t_{k+1}\in \Last(\vec{h}_k)$. Also, for each $k=1,\dots,n$, since $(c_k,\vec{h}_{k-1})\overset{\phi_k}{\Longrightarrow}(c_k,\vec{h}_k)$, we have $\phi_k\in \Last(\vec{h}_{k-1})\cap \Last(\vec{h}_k)$ and $\Kv_i(\phi_k,c_k)\in \Last(\vec{h}_{k-1})\cap \Last(\vec{h}_k)$. By $\mathrm{Kv}^r\mathrm{S}$, $\Kv_i(\phi_k,t_k)\in \Last(\vec{h}_{k-1})\cap \Last(\vec{h}_k)$. Since all the $\Last(\vec{h}_s)$ share the same $\Kv_i$-formulas, $\Kv_i(\phi_k,t_k)\in \Last(\vec{h}_0)\cap \Last(\vec{h}_n)$ for every $k$.

    Moreover, $\Last(\vec{h}_k)$ contains $\phi_k\wedge\phi_{k+1}\wedge t_k\approx t_{k+1}$ for $1\leq k<n$, so since $\vec{h}_0\sim'_i \vec{h}_k$ and $\vec{h}_n\sim'_i \vec{h}_k$, we have
    \( \widehat{K}_i(\phi_k\wedge\phi_{k+1}\wedge t_k\approx t_{k+1})\in \Last(\vec{h}_0)\cap \Last(\vec{h}_n). \)
    Therefore, the premises of $\mathrm{GKv}^r\vee$ are satisfied in both $\Last(\vec{h}_0)$ and $\Last(\vec{h}_n)$ for the sequence $(\phi_1,t_1),\dots,(\phi_n,t_n)$. Hence
    \( \Kv_i\Bigl( \bigvee_{k=1}^n (\phi_k \wedge t_0 \approx t_k), t_0 \Bigr) \in \Last(\vec{h}_0)\cap \Last(\vec{h}_n). \)
    Let $\psi:= \bigvee_{k=1}^n (\phi_k \wedge t_0 \approx t_k)$. We have $\phi_1\wedge t_0\approx t_1\in \Last(\vec{h}_0)$ and $\phi_n\wedge t_0\approx t_n\in \Last(\vec{h}_n)$, so $\psi\in \Last(\vec{h}_0)\cap \Last(\vec{h}_n)$. By $\mathrm{Kv}^r\mathrm{S}$, $\Kv_i(\psi,c_0)\in \Last(\vec{h}_0)\cap \Last(\vec{h}_n)$. Thus $(c_0,\vec{h}_0)\overset{\psi}{\Longrightarrow}(c_0,\vec{h}_n)$.
\end{proof}
\noindent\textbf{Step 4: Valuation Function.}
Our goal now is to define the valuation function $V_\mathbf{C}^\star$ on $W'$. Note that $W'$ is countable, and for each agent $i$, every $\sim_i'$-equivalence class in $W'$ forms a cluster consisting of exactly one parent node and all its immediate $i$-successors. This allows us to define the valuation level by level.
Fix an explicit partition of the $\mathbb{Z}$-chains into pairwise disjoint countable sets $\{E_n\}_{n\in\mathbb{N}}$ as follows. Enumerate all $\mathbb{Z}$-chains as $\{Z_k\mid k\in\mathbb{N}\}$, and fix a bijection $\langle\cdot,\cdot\rangle:\mathbb{N}\times\mathbb{N}\to\mathbb{N}$. Then define
\[ E_n:=\{Z_{\langle n,m\rangle}\mid m\in\mathbb{N}\}. \]
Each $E_n$ is countably infinite and the $E_n$ are pairwise disjoint, and $\bigcup_n E_n$ is the set of all $\mathbb{Z}$-chains. Intuitively, $E_n$ is the pool of fresh $\mathbb{Z}$-chains reserved for equivalence classes that first appear when we define the valuation on histories of height $n$.

Now define $V_\mathbf{C}^\star$ on $W'$ as follows: 
For $\langle w_0\rangle \in H_0$, intuitively, we only respect the atomic formulas in $w_0$. Technically, we can consider $\equiv_i$ classes in $(\mathbf{C}\cup\{0\})\times \{\langle w_0\rangle\}$ for arbitrary $i$. 
 For each equivalence class, we can find a representative. If $(0,\langle w_0\rangle)$ is in this class, $(0,\langle w_0\rangle)$ is the representative; otherwise, 
    $(c_n,\langle w_0\rangle)$ is the representative, where $c_n$ is the constant symbol with the smallest index in this class. The equivalence class $[0,\langle w_0\rangle]$ is placed in the natural-number part. $[(c_n,\langle w_0\rangle)]$ is placed in $Z_{\langle 0,2^{n+1}\rangle}$ (the representative receives value 0 in this $\mathbb{Z}$-chain). For any other $(d,\langle w_0\rangle)$ in this class, there are paths from the representative to it. However, by Corollary \ref{coro:loop}, all these paths have the same sum of weights. Then we define $V_\mathbf{C}^\star(d,\langle w_0\rangle)$ as the value of the representative plus the sum of weights on the path within the same $\mathbb{Z}$-chain. 
    
First note that for any two pairs $(c,\langle w_0\rangle)$ and $(d,\langle w_0\rangle)$,
if there is a $z$-weight path from one to the other, then $V^\star_\mathbf{C}(c,\langle w_0\rangle)+z=V^\star_\mathbf{C}(d,\langle w_0\rangle)$. This is because 
if there is an $x$-weight path from the representative to $(c,\langle w_0\rangle)$ and a $y$-weight path from the representative to $(d,\langle w_0\rangle)$, then there is a path from $(c,\langle w_0\rangle)$ to $(d,\langle w_0\rangle)$ with weight $y-x$. By Corollary \ref{coro:loop}, we have $y-x=z$, and thus $V^\star_\mathbf{C}(c,\langle w_0\rangle)+z=V^\star_\mathbf{C}(d,\langle w_0\rangle)$.

Now we prove that $S^a c\approx S^b d\in w_0$ iff $V^\star_\mathbf{C}(c,\langle w_0\rangle)+a=V^\star_\mathbf{C}(d,\langle w_0\rangle)+b$ (where $c,d$ may represent $0$). 
First if $S^a c\approx S^b d\in w_0$, then $(c,\langle w_0\rangle)\overset{a-b}{\Longrightarrow}(d,\langle w_0\rangle)$, by the above observation, $V^\star_\mathbf{C}(c,\langle w_0\rangle)+a= V^\star_\mathbf{C}(d,\langle w_0\rangle)+b$.
If $\neg S^a c \approx S^b d\in w_0$, then if there is a $x$-weight path from $(c,\langle w_0\rangle)$ to $(d,\langle w_0\rangle)$, then by the observation above, $V^\star_\mathbf{C}(c,\langle w_0\rangle)+x=V^\star_\mathbf{C}(d,\langle w_0\rangle)$. But by Lemma \ref{lemm: loop}, $x\neq a-b$. Thus $V^\star_\mathbf{C}(c,\langle w_0\rangle)+a\neq V^\star_\mathbf{C}(d,\langle w_0\rangle)+b$. If there is no path between them, they will be placed in different $\mathbb{Z}$-chains, and thus $V^\star_\mathbf{C}(c,\langle w_0\rangle)+a\neq V^\star_\mathbf{C}(d,\langle w_0\rangle)+b$ as well.

     Assume we have defined the values of constants for all histories with height $< n$, and we have used only chains from $E_0,\dots,E_{n-1}$ so far. Suppose also that we have verified that for any history $\vec{h}$ with height $< n$, $S^a c\approx S^b d\in \Last(\vec{h})$ iff $V^\star_\mathbf{C}(c,\vec{h})+a=V^\star_\mathbf{C}(d,\vec{h})+b$ (where $c,d$ may represent $0$).
    
      Now consider any $\sim'_j$ equivalence class in $H_{n-1}\cup H_n$. It contains one parent $\vec{p}\in H_{n-1}$ where the values of constants have already been defined, and countably many children in $H_n$ which are immediate $j$-successors of $\vec{p}$. Then we consider the equivalence classes by $\equiv_j$ over this $\sim'_j$ equivalence class. 
      The valuation of $\vec{p}$ is compatible: by the assumption, if $(c,\vec{p})$ and $(d,\vec{p})$ (where $c,d$ may represent $0$) have been placed in the same component (both in $\mathbb{N}$ or both in the same $\mathbb{Z}$-chain)
      with relative distance $z$, then they are connected in $\equiv_j$ by a path of weight $z$. If they are placed in different components, they must be $\equiv_j$ inequivalent. Therefore, for any $\equiv_j$ equivalence class containing some pair from $\vec{p}$, if it contains $(0,\vec{p})$, we take $(0,\vec{p})$ as the representative, place the whole class in the natural-number part, and give the representative value $0$. By $\mathrm{Kv}^r0$ with $\top$ as the witnessing formula, every $(0,\vec{u})$ in this $\sim'_j$-class belongs to this same $\equiv_j$-class. Otherwise, we choose the representative to be $(c_k,\vec{p})$ where $k$ is the smallest index such that $(c_k,\vec{p})$ is in this class, and we place the elements in this class in the same component as $(c_k,\vec{p})$, keeping the already defined value of $(c_k,\vec{p})$.
      
      For each $\equiv_j$ equivalence class that does not contain any pair of the form $(c,\vec{p})$,
    by the previous paragraph, such a class contains no pair of the form $(0,\vec{u})$. We can also find a representative because histories were encoded as natural numbers above. Specifically, choose the pair $(c_{k},\vec{v})$ in the class with lexicographically least $(k,m(\vec{v}))$, where $m(\vec{v})$ is the code of $\vec{v}$. 
    $[(c_{k},\vec{v})]$, the $\equiv_j$-class of this representative, is placed in $Z_{\langle n,2^{k+1}\cdot 3^{m(\vec{v})}\rangle}$. Also, $(c_{k},\vec{v})$ receives value 0 in this $\mathbb{Z}$-chain.

    In both kinds of classes, we have fixed where the representative is placed in $\mathbb{N}^\star$, either in the natural-number part or in a specified $\mathbb{Z}$-chain, and we have also fixed its value there.
    As before, for any other $(d,\vec{r})$ in the class, we define its value as the value of the representative plus the sum of weights on any path from the representative to it. 
    Then we can similarly verify that for any two pairs whose values have been defined,
    if there is a $z$-weight path from one to the other, then their values differ by $z$.
    Moreover, for any history $\vec{h}$ with height $\leq n$, $S^a c\approx S^b d\in \Last(\vec{h})$ iff $V^\star_\mathbf{C}(c,\vec{h})+a=V^\star_\mathbf{C}(d,\vec{h})+b$ (where $c,d$ may represent $0$).
Finally, throughout the construction, each pair $(c,\vec{h})$ receives exactly one value. This is because when $\vec{h}$ is a child in the equivalence class, we define its value; when $\vec{h}$ is a parent, we respect the value already defined for it.

\medskip\noindent\textbf{Step 5: Truth Lemma.} We can prove the truth lemma by induction on formulas.
\begin{lemma}[Truth Lemma]
    Let $\mathcal{M}:=\langle W',\mathbb{N}^\star, \mathbf{S}^\star,\{\sim_i'\}_{i\in I},V_\mathbf{C}^\star\rangle$ be the model constructed above.
    For any formula $\varphi$ and any history $\vec{h}\in W'$, $\mathcal{M},\vec{h}\models \varphi$ iff $\varphi\in \Last(\vec{h})$.   
\end{lemma}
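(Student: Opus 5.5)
We argue by induction on the structure of $\varphi$, simultaneously for all histories $\vec{h}\in W'$. The Boolean cases are routine from maximal consistency of $\Last(\vec{h})$.

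\textbf{Atomic case.} Every term has the form $S^a c$ with $c\in\mathbf{C}\cup\{0\}$, so every atomic formula is $S^a c\approx S^b d$. Its truth at $\vec{h}$ is exactly $V^\star_\mathbf{C}(c,\vec{h})+a=V^\star_\mathbf{C}(d,\vec{h})+b$. The required equivalence with membership in $\Last(\vec{h})$ is the invariant verified in Step 4 for every height.

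\textbf{Case $\K_j\psi$.} Suppose $\K_j\psi\in\Last(\vec{h})$ and $\vec{h}\sim'_j\vec{r}$. By Proposition~\ref{prop:imply}, $\Last(\vec{h})\sim^c_j\Last(\vec{r})$, so $\psi\in\Last(\vec{r})$, and the induction hypothesis gives $\mathcal{M},\vec{r}\models\psi$.

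Conversely, suppose $\neg\K_j\psi\in\Last(\vec{h})$. If $j\neq\Ag(\vec{h})$, Step 1 created a child $\vec{h}\cdot(j,(w,1))$ with $\neg\psi\in w$. If $j=\Ag(\vec{h})$, let $\vec{p}$ be the parent of $\vec{h}$. Since $\Last(\vec{p})\sim^c_j\Last(\vec{h})$, the $\K_j$-formulas are shared, so $\neg\K_j\psi\in\Last(\vec{p})$. The witness built for $\vec{p}$ is then $\sim'_j$-related to $\vec{h}$. In either case the induction hypothesis yields a $\sim'_j$-successor falsifying $\psi$.

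\textbf{Case $\Kv_j(\psi,t)$.} Write $t=S^a c$. By $\mathrm{Kv}^r\mathrm{S}$ on the syntactic side, and injectivity of $\mathbf{S}^\star$ on the semantic side, we may assume $a=0$. If $c=0$, both sides hold, by $\mathrm{Kv}^r0$ plus $\mathrm{DISTKv}^r$ and by rigidity of $0$ respectively. So let $c\in\mathbf{C}$.

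Suppose $\Kv_j(\psi,c)\in\Last(\vec{h})$, and take $\vec{u},\vec{v}\sim'_j\vec{h}$ with $\mathcal{M}\models\psi$ at both. By the induction hypothesis, $\psi\in\Last(\vec{u})\cap\Last(\vec{v})$. By Proposition~\ref{prop:imply}, $\Kv_j(\psi,c)$ belongs to both as well. Hence $(c,\vec{u})\overset{\psi}{\Longrightarrow}(c,\vec{v})$ is a type 2 step with weight $0$. Step 4 assigns values compatibly with weighted paths, so $V^\star_\mathbf{C}(c,\vec{u})=V^\star_\mathbf{C}(c,\vec{v})$.

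Conversely, suppose $\neg\Kv_j(\psi,c)\in\Last(\vec{h})$. Again move to the parent $\vec{p}$ when $j=\Ag(\vec{h})$; this is legitimate because $\neg\Kv_j$-formulas are shared along $\sim^c_j$ via $\mathrm{Kv}^r5$ and $\mathrm{Kv}^r4$. Step 1 then supplies $\vec{h}',\vec{h}''$ in the same $\sim'_j$-class, with $\psi$ in both last worlds but $\Last(\vec{h}')\cap\Last(\vec{h}'')\cap\Psi^j_c=\emptyset$. By the induction hypothesis, $\psi$ holds at both. It remains to show $V^\star_\mathbf{C}(c,\vec{h}')\neq V^\star_\mathbf{C}(c,\vec{h}'')$.

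\textbf{Main obstacle.} This last inequality is where the work lies. Equal values can arise only if the two pairs lie in the same $\equiv_j$-class with path weight $0$. Values in distinct classes are distinct: distinct classes either use fresh $\mathbb{Z}$-chains, or, when they contain pairs from the parent, inherit the parent's already-correct separation by the induction invariant on $\vec{p}$. The natural-number part is reserved for the unique class of $0$, and membership of $(c,\vec{h}')$ there would itself give a connection.

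So suppose the two pairs are in the same $\equiv_j$-class. By Corollary~\ref{coro:loop} the path weight is well defined, so equal values force weight $0$. Lemma~\ref{lemm:loop2} then yields a single $\chi$ with $\Kv_j(\chi,c)\wedge\chi\in\Last(\vec{h}')\cap\Last(\vec{h}'')$. Since $\Kv_j(\chi,c)$ is shared along $\sim^c_j$, we get $\chi\in\Psi^j_c(\Last(\vec{h}))$, contradicting the choice of $u,v$.

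Making this precise requires checking that the Step 4 placement really gives distinct values across distinct $\equiv_j$-classes within a single $\sim'_j$-cluster, including the classes anchored at the parent. That check is what I expect to be the most delicate part of the argument.
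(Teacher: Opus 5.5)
Your proposal is correct and follows essentially the same route as the paper. The induction is the same, and so are the atomic case (from the Step 4 invariant), the converse directions for $\K_j$ and $\Kv_j$ (via direct or parent-supplied witnesses), and the final contradiction through Lemma~\ref{lemm:loop2} and $\Psi^j_c$. The cross-class separation you flag as delicate is what the paper compresses into ``by the construction of $V_\mathbf{C}^\star$''. Your sketch of it is the right justification: fresh chains from $E_n$, anchored classes kept apart by the invariant at $\vec{p}$, and $\mathbb{N}$ reserved for the class of $0$.
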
\begin{proof}
     By induction on $\varphi$, we prove that for any $\vec{h}\in W'$, $\mathcal{M},\vec{h}\models \varphi$ iff $\varphi\in \Last(\vec{h})$.
    The atomic case was proved when we defined $V_\mathbf{C}^\star$.
    The boolean cases are trivial.
    
    For the case $\varphi := \K_i \psi$: if $\K_i \psi \in \Last(\vec{h})$, then for any $\vec{u} \in W'$ with $\vec{h} \sim'_i \vec{u}$, we have $\Last(\vec{h}) \sim^c_i \Last(\vec{u})$ by Proposition \ref{prop:imply}. Thus $\psi \in \Last(\vec{u})$. The induction hypothesis gives $\mathcal{M}, \vec{u} \models \psi$, so $\mathcal{M}, \vec{h} \models \K_i \psi$. On the other hand, if $\neg \K_i \psi \in \Last(\vec{h})$, we consider two cases. If $i \neq \Ag(\vec{h})$, there exists a direct witness $\vec{u} \in W'$ such that $\vec{h} \rightarrow_i \vec{u}$ and $\neg \psi \in \Last(\vec{u})$.
    If $i = \Ag(\vec{h})$, the witness construction for $i$ was skipped for $\vec{h}$. However, consider its $i$-parent $\vec{p}$, since $\Last(\vec{h}) \sim_i^c \Last(\vec{p})$, $\neg \K_i \psi \in \Last(\vec{p})$. By the construction of $W'$, we
    have selected a witness $\vec{u}$ such that $\vec{p} \rightarrow_i \vec{u}$ and $\neg \psi \in \Last(\vec{u})$. In both cases, $\vec{h} \sim'_i \vec{u}$ and $\neg \psi \in \Last(\vec{u})$. By the induction hypothesis, $\mathcal{M}, \vec{u} \not\models \psi$, establishing $\mathcal{M}, \vec{h} \not\models \K_i \psi$.
    
    For the case $\varphi := \Kv_i(\psi, t)$, by applying axiom $\text{Kv}^r \text{S}$, it suffices to consider the case in which $t$ is a constant $c$. If $\Kv_i(\psi, c) \in \Last(\vec{h})$, let $\vec{u}, \vec{v} \in W'$ be such that $\vec{h} \sim'_i \vec{u}$, $\vec{h} \sim'_i \vec{v}$ and both $\mathcal{M}, \vec{u} \models \psi$ and $\mathcal{M}, \vec{v} \models \psi$. By the induction hypothesis, $\psi \in \Last(\vec{u}) \cap \Last(\vec{v})$. Since $\Last(\vec{h}) \sim_i^c \Last(\vec{u})\sim_i^c \Last(\vec{v})$, 
    $\Kv_i(\psi, c) \in \Last(\vec{u}) \cap \Last(\vec{v})$. We also have $\vec{u} \sim'_i \vec{v}$. The definition of $R_i$ then guarantees $(c, \vec{u}) \overset{\psi}{\Longrightarrow} (c, \vec{v})$. Thus, $(c, \vec{u}) \equiv_i (c, \vec{v})$ with weight 0. By the construction of $V_\mathbf{C}^\star$, $V_\mathbf{C}^\star(c, \vec{u}) = V_\mathbf{C}^\star(c, \vec{v})$, yielding $\mathcal{M}, \vec{h} \models \Kv_i(\psi, c)$. 
    Conversely, if $\neg \Kv_i(\psi, c) \in \Last(\vec{h})$, we again consider two cases. If $i \neq \Ag(\vec{h})$, the construction directly provides witnesses $\vec{u}, \vec{v} \in W'$ such that $\vec{h} \rightarrow_i \vec{u}$ and $\vec{h} \rightarrow_i \vec{v}$, with $\psi \in \Last(\vec{u}) \cap \Last(\vec{v})$ and $\Last(\vec{u}) \cap \Last(\vec{v}) \cap \Psi^i_c(\Last(\vec{h})) = \emptyset$. If $i = \Ag(\vec{h})$, the witnesses were selected by its $i$-parent $\vec{p}$; hence there are $\vec{u}$ and $\vec{v}$ such that $\vec{p} \rightarrow_i \vec{u}$ and $\vec{p} \rightarrow_i \vec{v}$, with $\psi \in \Last(\vec{u}) \cap \Last(\vec{v})$ and $\Last(\vec{u}) \cap \Last(\vec{v}) \cap \Psi^i_c(\Last(\vec{h})) =\Last(\vec{u}) \cap \Last(\vec{v}) \cap \Psi^i_c(\Last(\vec{p})) = \emptyset$. 
In either case, we have $\vec{h} \sim'_i \vec{u}$, $\vec{h} \sim'_i \vec{v}$, $\psi \in \Last(\vec{u}) \cap \Last(\vec{v})$ and $\Last(\vec{u}) \cap \Last(\vec{v}) \cap \Psi^i_c(\Last(\vec{h})) = \emptyset$.
By the induction hypothesis, $\mathcal{M}, \vec{u} \models \psi$ and $\mathcal{M}, \vec{v} \models \psi$. 
If $(c, \vec{u}) \equiv_i (c, \vec{v})$ with weight 0, then by Lemma \ref{lemm:loop2}, there exists a formula $\chi$ such that $\Kv_i(\chi, c) \wedge \chi \in \Last(\vec{u}) \cap \Last(\vec{v})$. Then $\chi\in \Psi^i_c(\Last(\vec{h}))$, contradicting the choice of $\vec{u}$ and $\vec{v}$. If it is not the case that $(c, \vec{u}) \equiv_i (c, \vec{v})$ with weight 0, then by the construction of $V_\mathbf{C}^\star$, $V_\mathbf{C}^\star(c, \vec{u}) \neq V_\mathbf{C}^\star(c, \vec{v})$, yielding $\mathcal{M}, \vec{h} \not\models \Kv_i(\psi, c)$.
\end{proof}
This yields the desired completeness result.
\medskip
\begin{theorem}
    $\sys$ is strongly complete w.r.t. non-standard models.
\end{theorem}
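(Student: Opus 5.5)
The plan is to reduce strong completeness to a model-existence statement and then supply that model with the construction of Steps 1--5. Suppose $\Gamma_0\nvdash\varphi$. Then $\Gamma_0\cup\{\neg\varphi\}$ is consistent. Otherwise, by compactness of derivations, there would be finitely many $\gamma_1,\dots,\gamma_n\in\Gamma_0$ with $\vdash\gamma_1\wedge\dots\wedge\gamma_n\to\varphi$, using TAUT and MP, and this contradicts the assumption. So it suffices to show that every consistent set is satisfiable in some non-standard model. Once that holds, a model of $\Gamma_0\cup\{\neg\varphi\}$ witnesses $\Gamma_0\not\models\varphi$ w.r.t.\ non-standard models, which is the contrapositive of strong completeness.

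Given the consistent set $\Gamma_0\cup\{\neg\varphi\}$, we extend it by Lindenbaum's lemma to a maximal consistent set $\Gamma$. We set $w_0=(\Gamma,1)$ and run Steps 1--4 to obtain $\mathcal{M}=\langle W',\mathbb{N}^\star,\mathbf{S}^\star,\{\sim_i'\}_{i\in I},V_\mathbf{C}^\star\rangle$. Then we check three points.

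First, each $\sim_i'$ is an equivalence relation, since it is defined as a reflexive, symmetric, transitive closure.

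Second, $V_\mathbf{C}^\star$ is a well-defined total function $\mathbf{C}\times W'\to\mathbb{N}^\star$. Every pair $(c,\vec h)$ gets a value: $\vec h$ is either the root, which is handled at level $0$, or a child in exactly one $\sim'_{\Ag(\vec h)}$-class with a parent, where it is valued at its own level. Parents retain their earlier values. The value does not depend on the path chosen, by Corollary \ref{coro:loop}. Classes that receive fresh $\mathbb{Z}$-chains do not collide: the indices $\langle n,2^{k+1}3^{m(\vec v)}\rangle$ and $\langle 0,2^{n+1}\rangle$ are distinct for distinct representatives, and the pools $E_n$ are disjoint. Finally, the class of $0$ is anchored at $\mathbf{0}$, as required by $V_\mathrm{T}^\star(0,w)=\mathbf{0}$. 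This last point needs Kv$^r$0 so that every $(0,\vec u)$ in a cluster lies in one $\equiv_j$-class.

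Third, $\mathcal{M}$ is a genuine non-standard model based on the fixed structure $(\mathbb{N}^\star,\mathbf{S}^\star)$.

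We then apply the Truth Lemma at the root $\langle w_0\rangle$. Since $\Last(\langle w_0\rangle)=\Gamma\supseteq\Gamma_0\cup\{\neg\varphi\}$, we get $\mathcal{M},\langle w_0\rangle\models\Gamma_0$ and $\mathcal{M},\langle w_0\rangle\not\models\varphi$, which completes the argument.

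The main obstacle is the well-definedness check in the second point, specifically agreement between the level-by-level valuation and arithmetic truth at parent nodes. When a $\sim'_j$-cluster is processed, some $\equiv_j$-classes contain parent pairs whose values were fixed while processing other agents' clusters. If two such parent pairs $(c,\vec p)$ and $(d,\vec p)$ are $\equiv_j$-connected with weight $z$, the induction hypothesis gives $S^a c\approx S^b d\in\Last(\vec p)$ for the appropriate $a,b$ by Lemma \ref{lemm: loop}. Their previously assigned values therefore already differ by $z$, so no conflict arises. Conversely, any two parent pairs that are not $\equiv_j$-connected must already sit in different components or at consistent offsets. I would verify this explicitly by combining Lemma \ref{lemm: loop} with the induction hypothesis on atomic formulas for $\vec p$. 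The rest of the argument is routine given the Truth Lemma.
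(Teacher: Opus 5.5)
Your proposal is correct and follows essentially the same route as the paper. The paper's proof extends a consistent $\Gamma_0$ to an MCS $\Gamma$ and applies the Truth Lemma for the tree-like model of Steps 1--5 at the root $\langle(\Gamma,1)\rangle$; you additionally spell out the standard reduction via consistency of $\Gamma_0\cup\{\neg\varphi\}$ and the well-definedness checks that the paper performs inside Step 4.

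One small tightening: two parent pairs $(c,\vec p),(d,\vec p)$ that are not $\equiv_j$-connected must lie in \emph{different} components. If they lay in the same component, the induction hypothesis at $\vec p$ would put some $S^a c\approx S^b d$ in $\Last(\vec p)$, giving a Type 1 step between them, so the alternative ``or at consistent offsets'' never arises.
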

\begin{proof}
    Given a consistent formula set $\Gamma_0$, $\mathcal{M},\langle (\Gamma,1)\rangle \models \Gamma_0$.
\end{proof}

\section{From Non-standard Models to Standard Models}
\label{sec:standard-models}
In this section, we bridge the gap between non-standard models and standard models. This allows us to establish the weak completeness of $\sys$ with respect to standard models.
In the process, we also obtain the finite model property w.r.t. both standard and non-standard models,
which will be used to establish
 the decidability of $\logic$.
We first show that satisfiability in a finite non-standard model implies satisfiability in a finite standard model.
\begin{lemma}
    \label{lemm:standard}
    For any formula $\phi$ satisfiable in a finite non-standard model $\mathcal{N}$, 
 it is also satisfiable in a finite standard model.
\end{lemma}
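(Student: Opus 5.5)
Given a finite non-standard model $\mathcal{N}=\langle W,\mathbb{N}^\star,\mathbf{S}^\star,\{\sim_i\},V^\star_\mathbf{C}\rangle$ and a world $w$ with $\mathcal{N},w\models\phi$, I would keep $W$ and all $\sim_i$ unchanged and replace only the valuation. Define an injective map $f:\mathbb{N}^\star\supseteq X\to\mathbb{N}$ on a suitable finite-window region that commutes with successor as far as $\phi$ can observe. Let $N$ be one more than the maximal nesting depth of $S$ in terms occurring in $\phi$ (so every term of $\phi$ has the form $S^a c$ or $S^a 0$ with $a<N$). Since $W$ is finite and only finitely many constants $c_1,\dots,c_m$ occur in $\phi$, the set $A=\{V^\star_\mathbf{C}(c_k,w')\mid k\le m,\ w'\in W\}$ is finite. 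Only finitely many $\mathbb{Z}$-chains $Z_{n_1},\dots,Z_{n_r}$ meet $A$. In each such chain, the elements of $A$ occupy a bounded interval; let $L$ bound the spread of $A$ inside any one chain. Let $B$ exceed every standard number in $A$, plus $N$.

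Now define $f$ on $A$. It is the identity on $A\cap\mathbb{N}$. On $A\cap Z_{n_j}$, set $f(n_j,z)=B+j\cdot(L+2N+1)+(z-z_j)$, where $z_j$ is the least $z$ with $(n_j,z)\in A$. Then put $V_\mathbf{C}(c,w')=f(V^\star_\mathbf{C}(c,w'))$ for the constants in $\phi$, and $0$ for the others (irrelevant by a standard locality argument).

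The key claim is that for all $c,d\in\{c_1,\dots,c_m,0\}$, $a,b<N$, and $u,v\in W$, we have $V^\star_\mathrm{T}(S^ac,u)=V^\star_\mathrm{T}(S^bd,v)$ iff $V_\mathrm{T}(S^ac,u)=V_\mathrm{T}(S^bd,v)$. Note that comparisons are made across worlds, since $\Kv$ compares a term's values at different worlds. This is the main step, and I would verify it by cases.

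If both values lie in the same component (both standard, or both in $Z_{n_j}$), then $f$ is a translation there, so the equality is preserved in both directions.

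If the components differ, the $\star$-values are unequal, and we need the standard values to be unequal as well. The blocks for different chains are separated by gaps larger than $2N$ and lie above $B>\max(A\cap\mathbb{N})+N$. Hence adding $a,b<N$ can never make two images coincide. Moreover, each block's image is shifted so that all values stay $\ge B$, which keeps them away from the standard part.

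With the claim in hand, I prove by induction on subformulas $\psi$ of $\phi$ that $\mathcal{N},u\models\psi$ iff $\mathcal{M},u\models\psi$ for all $u\in W$. The atomic case is the claim with $u=v$. The Boolean and $\K_i$ cases are immediate, since frames coincide. The $\Kv_i(\psi,t)$ case uses the claim for the term $t$ at the two worlds, together with the induction hypothesis on $\psi$.

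The resulting model $\mathcal{M}=\langle W,\mathbb{N},\mathbf{S},\{\sim_i\},V_\mathbf{C}\rangle$ is a finite standard model satisfying $\phi$ at $w$. The only delicate point is choosing the spacing constants correctly, so that images of distinct chains, and of chains versus standard numbers, remain more than $N$ apart after applying $f$.
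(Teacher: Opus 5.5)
Your proposal is correct and is essentially the paper's proof. You keep the finite frame and push the finitely many relevant values through an injective $f$ that is the identity on the standard part, a translation on each $\mathbb{Z}$-chain, and separates distinct components by more than the largest successor offset; then you argue by induction on subformulas. The only differences are cosmetic: you give $f$ explicitly and handle the $\Kv_i$ case with a cross-world version of the atomic claim, whereas the paper first reduces $\Kv_i(\chi,S^kc)$ to $\Kv_i(\chi,c)$ and then uses the injectivity of $f$.
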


\begin{proof}
	Let $\mathcal{N} = \langle W,\mathbb{N}^\star, \mathbf{S}^\star, \{\sim_i\}_{i\in \mathbf{I}}, V^\star_\mathbf{C}\rangle$ be a non-standard model such that $\mathcal{N}, w_0 \models \phi$, where $W$ is finite. 
	Let
    \[ U=\{V_\mathbf{C}^{\star}(c,w)\mid c \text{ occurs in }\phi,\ w\in W\}\cup\{0\}. \]
	Intuitively, $U$ is the set of all the numbers that are relevant to the truth of $\phi$ in $\mathcal{N}$. 
    Since $W$ is finite and $\phi$ contains only finitely many constants, $U$ is finite and thus $U\cap Z_n$ and $U\cap\mathbb{N}$ are finite for each $n$.

	Let $B= \max(\{a+b\mid \exists  x,y\in \mathbf{C}\cup\{0\},\ S^a x \approx S^b y  \text{ is a subformula of }\phi\}\cup \{0\})$.
	We build an injective map $f:U\to \mathbb{N}$ such that:
	\begin{enumerate}
		\item For $x,y\in U$ in the same component (both in $\mathbb{N}$ or both in the same $Z_n$),
		$f(x)-f(y)=x-y$.
		\item For $x,y\in U$ in different components, $|f(x)-f(y)|>B$.
		\item $f(x)=x$ for all $x\in U\cap\mathbb{N}$.
	\end{enumerate}
	Such an $f$ exists because $U\cap \mathbb{N}$ and each $U\cap Z_n$ is finite, so we can preserve all local distances inside a component and separate different components by gaps greater than $B$.
Define a standard model $\mathcal{M} = \langle W,\mathbb{N}, \mathbf{S}, \{\sim_i\}_{i\in \mathbf{I}}, V_\mathbf{C}\rangle$ by
	defining \( V_\mathbf{C}(c,u)=f\bigl(V_\mathbf{C}^{\star}(c,u)\bigr) \)
	for each constant $c$ occurring in $\phi$ and $u\in W$. For other constants, assign arbitrary values.
    We show by induction on subformulas $\psi$ of $\phi$ that for all $u\in W$,
	\[ \mathcal{N},u\models \psi \iff \mathcal{M},u\models \psi. \]
    \begin{itemize}
		\item Atomic case: $\psi := t_1 \approx t_2$. Suppose $t_1$ and $t_2$ are $S^a c_1$ and $S^b c_2$ respectively (the case of $S^k0$ can be proved similarly). If $\mathcal{N}, u \models t_1 \approx t_2$, then $V_\mathrm{T}(t_1,u) = V_\mathrm{T}(t_2,u)$, which means $V_\mathbf{C}^\star(c_1,u) + a = V_\mathbf{C}^\star(c_2,u) + b$. By the definition of $f$, we have $f(V_\mathbf{C}^\star(c_1,u)) + a = f(V_\mathbf{C}^\star(c_2,u)) + b$, which means $\mathcal{M}, u \models t_1 \approx t_2$. 
		Suppose $\mathcal{N}, u \models \neg S^a c_1\approx S^b c_2$, then if $V_\mathbf{C}^\star(c_1,u)$ and $V_\mathbf{C}^\star(c_2,u)$ are in the same component but $V_\mathbf{C}^\star(c_1,u) + a \neq V_\mathbf{C}^\star(c_2,u) + b$, then by property 1, $f(V_\mathbf{C}^\star(c_1,u)) + a \neq f(V_\mathbf{C}^\star(c_2,u)) + b$ and $\mathcal{M}, u \models \neg t_1 \approx t_2$. If $V_\mathbf{C}^\star(c_1,u)$ and $V_\mathbf{C}^\star(c_2,u)$ are in different components, then by property 2, $|f(V_\mathbf{C}^\star(c_1,u)) - f(V_\mathbf{C}^\star(c_2,u))| > B \geq |a-b|$, so $f(V_\mathbf{C}^\star(c_1,u)) + a \neq f(V_\mathbf{C}^\star(c_2,u)) + b$ and $\mathcal{M}, u \models \neg t_1 \approx t_2$.
        \item Boolean connectives and $\psi:=\top$: Trivial.
        \item $\psi := \K_i \chi$. 
        The IH is: $\forall v \in W, \mathcal{N}, v \models \chi \iff \mathcal{M}, v \models \chi$.
        \begin{align*}
            \mathcal{N}, u \models \K_i \chi &\iff \forall v \in W (u \sim_i v \Rightarrow \mathcal{N}, v \models \chi) \\
            &\iff \forall v \in W (u \sim_i v \Rightarrow \mathcal{M}, v \models \chi) \quad (\text{by IH}) \\
            &\iff \mathcal{M}, u \models \K_i \chi
        \end{align*}
        \item $\psi := \Kv_i(\chi,t)$. By the semantics, $\Kv_i(\chi,S^k c) \leftrightarrow \Kv_i(\chi,c)$ and $\Kv_i(\chi, S^k 0) \leftrightarrow \top$ are valid. So we only need to consider the case in which $t$ has the form $c$. Note that the IH is: $\forall v \in W, \mathcal{N}, v \models \chi \iff \mathcal{M}, v \models \chi$.
        \begin{align*}
            \mathcal{N}, u \models \Kv_i(\chi,c) 
            \iff& \forall v_1, v_2 \sim_i u, \text{ if } \mathcal{N}, v_1 \models \chi \text{ and } \mathcal{N}, v_2 \models \chi,\\
			&\text{then } V^\star_\mathbf{C}(c, v_1) = V^\star_\mathbf{C}(c, v_2) \\
            \iff& \forall v_1, v_2 \sim_i u, \text{if } \mathcal{M}, v_1 \models \chi \text{ and } \mathcal{M}, v_2 \models \chi,\\
			&\text{then }
			f(V^\star_\mathbf{C}(c, v_1)) = f(V^\star_\mathbf{C}(c, v_2)) \quad (\text{IH and } f \text{ is injective}) \\
            \iff& \forall v_1, v_2 \sim_i u, \text{if } \mathcal{M}, v_1 \models \chi \text{ and } \mathcal{M}, v_2 \models \chi,\\
			&\text{then }
			V_\mathbf{C}(c, v_1) = V_\mathbf{C}(c, v_2) \\
            \iff& \mathcal{M}, u \models \Kv_i(\chi,c)
        \end{align*}
    \end{itemize}
	Hence $\mathcal{M},w_0\models \phi$, so $\phi$ is satisfiable in a finite standard model.
\end{proof}

To apply the above lemma, we prove the finite model property for non-standard models via a selection argument from the canonical tree-like model.
\begin{theorem}
    \label{thm:fmp}
    $\logic$ has finite model property w.r.t. non-standard models. Moreover, if a formula $\phi$ is satisfiable in a non-standard model, it is satisfiable in a finite non-standard model whose size is bounded by $(2|\Sigma|)^{\dep(\phi)+1}$, where $\Sigma$ is the set of subformulas of $\phi$ closed under single negation, and $\dep(\phi)$ is the modal depth of $\phi$.
\end{theorem}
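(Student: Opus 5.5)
Since $\phi$ is satisfiable in a non-standard model, it is consistent by soundness. The strong completeness construction of Section~\ref{subsec:comp-nonstandard} then yields the tree-like model $\mathcal{M}$ on $W'$ with a root $\langle w_0\rangle$ such that $\mathcal{M},\langle w_0\rangle\models\phi$. I will select a finite subtree of this model and keep the same valuation $V_\mathbf{C}^\star$.

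\textbf{Selection.} Level by level, I select histories up to height $\dep(\phi)$. At a selected history $\vec{h}$ of height $k<\dep(\phi)$, and for each agent $j\neq\Ag(\vec{h})$ occurring in $\phi$, I keep the following witnesses.
\begin{itemize}
\item For each $\neg\K_j\psi\in\Last(\vec{h})\cap\Sigma$ whose $\K_j\psi$ occurs at modal depth at most $\dep(\phi)-k$ in $\phi$, I keep the direct witness $\vec{h}\cdot(j,(w,1))$ built in Step~1.
\item For each $\neg\Kv_j(\psi,t)\in\Last(\vec{h})\cap\Sigma$, I keep the two witnesses $\vec{h}'$ and $\vec{h}''$.
\end{itemize}
Since $\Sigma$ contains at most $|\Sigma|$ negated modal formulas, each node gets at most $2|\Sigma|$ children. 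The tree of height $\dep(\phi)$ therefore has at most $\sum_{k\le\dep(\phi)}(2|\Sigma|)^k\le(2|\Sigma|)^{\dep(\phi)+1}$ nodes. Call the selected set $W''$. The relation $\sim''_j$ is the restriction of $\sim'_j$ to $W''$, which is again an equivalence relation. The valuation is $V_\mathbf{C}^\star$ restricted to $W''$.

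\textbf{Truth preservation.} I will show, by induction on $\psi\in\Sigma$ and for $\vec{h}\in W''$ of height $k$ with the modal depth of $\psi$ at most $\dep(\phi)-k$, that $\mathcal{M}'',\vec{h}\models\psi$ iff $\mathcal{M},\vec{h}\models\psi$.
\begin{itemize}
\item Atomic formulas are immediate, since the valuation is unchanged.
\item For $\K_j\psi$, one direction holds because $\sim''_j\subseteq\sim'_j$.
\item For $\neg\K_j\psi$, the needed witness was kept. Either it is a child of $\vec{h}$, or, when $j=\Ag(\vec{h})$, it is a child of the parent $\vec{p}$, found as in the Truth Lemma. In the second case $\neg\K_j\psi\in\Last(\vec{p})$ by Proposition~\ref{prop:imply}. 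The parent has smaller height, so its selection already included this witness, and $\vec{h}$ and the witness have the same height $k$ as siblings. The depth bookkeeping is therefore consistent.
\item For a true $\Kv_j(\psi,c)$, restricting to fewer worlds preserves it.
\item For $\neg\Kv_j(\psi,c)$, the kept witnesses $\vec{u},\vec{v}$ satisfy $\psi$, by the induction hypothesis, and have $V^\star_\mathbf{C}(c,\vec{u})\neq V^\star_\mathbf{C}(c,\vec{v})$, by the Truth Lemma argument in $\mathcal{M}$. The valuation is inherited, so the same inequality holds in $\mathcal{M}''$.
\end{itemize}

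\textbf{Main obstacle.} The hard part is the depth bookkeeping for nodes $\vec{h}$ with $j=\Ag(\vec{h})$, whose $j$-witnesses come from the parent. This is why I index the requirement by height rather than by distance from $\vec{h}$. Every node in a $\sim''_j$-cluster other than the parent sits at height $k$, so each sibling inherits the parent's budget minus one. I also need $\Sigma$ closed under single negation and subformulas, so that the modal formulas in $\Last(\vec{h})\cap\Sigma$ are exactly those the induction needs.

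Finally, the same argument combined with Lemma~\ref{lemm:standard} will give the finite model property for standard models.
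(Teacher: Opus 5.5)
Your proposal is correct and is essentially the paper's proof. As in the paper, you select from the tree-like canonical model a subtree of height $\dep(\phi)$ with at most $2|\Sigma|$ witnesses per node, inherit the valuation, and prove agreement on $\psi\in\Sigma$ at histories of height at most $\dep(\phi)-\dep(\psi)$, using the parent's witnesses when $j=\Ag(\vec{h})$. One point to tidy: your filter on $\K_j$-witnesses should read $\dep(\neg\K_j\psi)\le\dep(\phi)-k$, the formula's own depth rather than the depth of its occurrence in $\phi$; this is the paper's condition and the one your induction uses. Alternatively you can drop the filter entirely, as you already do for $\Kv_j$, since height and branching are bounded anyway.
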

\begin{proof}
    For any formula $\phi$ satisfiable in a non-standard model, it is satisfiable in the tree-like model constructed in the completeness proof in Subsection \ref{subsec:comp-nonstandard}. 
   This tree-like model is $\mathcal{M}= \langle W', \mathbb{N}^\star, \mathbf{S}^\star, \{\sim_i'\}_{i\in I}, V_\mathbf{C}^\star\rangle$. 
   Suppose the root is $\langle w_0 \rangle$, where $\phi \in w_0$.
   Our goal is to select a finite submodel from $\mathcal{M}$ that still satisfies $\phi$. 
    Let $\Sigma$ be the set of subformulas of $\phi$ closed under single negation.
    We define the \textbf{modal depth} $\dep(\varphi)$ as:
    \begin{itemize}
        \item $\dep(\alpha) = 0$ for atomic formulas and Boolean combinations of atomics.
        \item $\dep(\neg \varphi) = \dep(\varphi)$, $\dep(\varphi \wedge \psi) = \max(\dep(\varphi), \dep(\psi))$.
        \item $\dep(\K_i \varphi) = \dep(\varphi) + 1$.
        \item $\dep(\Kv_i(\varphi,t)) = \dep(\varphi)+1$.
    \end{itemize}
    Let $D = \dep(\phi)$.
    
    \medskip\noindent\textbf{Selection.}
    Construct $W_s \subset W'$ level by level. Recall that $W'=\bigcup_{k=0}^\infty H_k$ where $H_k$ is the set of nodes with height $k$.
    Let $H'_0 = H_0 =\{\langle w_0\rangle\}$. Assume $H'_k$ is constructed for $k < D$.
    We construct $H'_{k+1}$ as follows. For each $\vec{h} \in H'_k$ and $j \in I$:
    \begin{enumerate}
        \item If $j = \Ag(\vec{h})$: Skip. 
        \item If $j \neq \Ag(\vec{h})$:
        \begin{itemize}
            \item For $\neg \K_j \psi \in \Sigma$: If $\mathcal{M}, \vec{h} \models \neg\K_j \psi$ and $\dep(\neg \K_j \psi) \le D - k$, select one witness $\vec{h}'\in H_{k+1}$ with $\mathcal{M}, \vec{h}' \nvDash \psi$ and add to $H'_{k+1}$.
            \item For $\neg \Kv_j(\chi,S^k c) \in \Sigma$: If $\mathcal{M}, \vec{h} \models \neg \Kv_j(\chi,S^k c)$ and $\dep(\neg \Kv_j(\chi,S^k c)) \le D - k$, select two witnesses $\vec{h}_1,\vec{h}_2 \in H_{k+1}$ with $\mathcal{M}, \vec{h}_1 \models \chi$ and $\mathcal{M}, \vec{h}_2 \models \chi$ such that $V_\mathbf{C}^\star(c, \vec{h}_1) \neq V_\mathbf{C}^\star(c, \vec{h}_2)$. 
            Add them to $H'_{k+1}$.
            (This can be done because of the construction of $W'$.)
        \end{itemize}
    \end{enumerate}
    Let $W_s = \bigcup_{k=0}^D H'_k$. $W_s$ is finite. Let $\sim_i^s$ be the restriction of $\sim_i'$ to $W_s$, and $V_\mathbf{C}^s$ be the restriction of $V_\mathbf{C}^\star$ to $W_s$. Let $\mathcal{M}_s = \langle W_s, \mathbb{N}^\star, \mathbf{S}^\star, \{\sim_i^s\}_{i\in I}, V_\mathbf{C}^s\rangle$.
We next prove the following lemma.
\begin{lemma}
    \label{lemm:preserve}
     For each $d$ such that $0 \le d \le D$ and for each formula $\psi \in \Sigma$ with $\dep(\psi) = d$, and for any node $\vec{h} \in W_s$, if the height condition
    \( \hgt(\vec{h}) \le D - d \)
    is satisfied, then: $\mathcal{M}_s, \vec{h} \models \psi \iff \mathcal{M}, \vec{h} \models \psi$.
\end{lemma}
\begin{proof}[Proof of the lemma]
By induction on the modal depth $d$ ($0 \le d \le D$).

    \noindent\textbf{Base Case ($d=0$):}
    $\psi$ is atomic or Boolean combination of atomics. The condition $\hgt(\vec{h}) \le D$ always holds for $\vec{h} \in W_s$. Since valuations are inherited directly from $\mathcal{M}$, the equivalence holds trivially.

    \noindent\textbf{Inductive Hypothesis:}
    Assume the Claim holds for all formulas with depth $< d$, i.e., for any formula $\chi\in \Sigma$ with $\dep(\chi) < d$, and for any node $\vec{h} \in W_s$ satisfying $\hgt(\vec{h}) \le D - \dep(\chi)$, we have:
    \( \mathcal{M}_s, \vec{h} \models \chi \iff \mathcal{M}, \vec{h} \models \chi. \)

    \noindent\textbf{Inductive Step:}
    We consider formulas of depth $d$.
    
    \noindent\textit{Subcase 1: $\psi := \K_j \chi$.} Here $\dep(\psi) = d$, so $\dep(\chi) = d-1$.
    Assume the condition holds for $\vec{h}$: $\hgt(\vec{h}) \le D - d$.
    
    ($\Leftarrow$) Suppose $\mathcal{M}, \vec{h} \models \neg\K_j \chi$.
    \begin{itemize}
        \item If $j \neq \Ag(\vec{h})$:
        Since $\hgt(\vec{h}) \le D - d$, the selection condition for $\vec{h}$ is satisfied. Thus, during the construction of $H'_{\hgt(\vec{h})+1}$, a witness $\vec{h}' \in H'_{\hgt(\vec{h})+1}$ was selected such that $\vec{h} \sim'_j \vec{h}'$ and $\mathcal{M}, \vec{h}' \nvDash \chi$.
        As $\hgt(\vec{h}') = \hgt(\vec{h}) + 1 \le D - d + 1 = D - \dep(\chi)$, the condition for IH on $\vec{h}'$ holds. By IH, $\mathcal{M}_s, \vec{h}' \nvDash \chi$, and since $\vec{h} \sim^s_j \vec{h}'$, we have $\mathcal{M}_s, \vec{h} \models \neg\K_j \chi$.
        
        \item If $j = \Ag(\vec{h})$:
        By the property of $W'$, $\vec{h}$ is a $j$-successor of its parent $\vec{p}$. Since $\vec{h} \sim'_j \vec{p}$, $\mathcal{M}, \vec{p} \models \neg\K_j \chi$.
        Since $\hgt(\vec{p}) = \hgt(\vec{h}) - 1 < D - d$, the selection condition for $\vec{p}$ strictly holds. Thus, when constructing successors for $\vec{p}$, a witness $\vec{w} \in H'_{\hgt(\vec{p})+1}$ was selected such that $\vec{p} \sim'_j \vec{w}$ and $\mathcal{M}, \vec{w} \nvDash \chi$.
        In $\mathcal{M}_s$, $\vec{h}, \vec{p}, \vec{w}$ are in the same $j$-cluster, so $\vec{h} \sim^s_j \vec{w}$.
        Moreover, $\hgt(\vec{w}) = \hgt(\vec{p}) + 1 = \hgt(\vec{h}) \le D - d < D - \dep(\chi)$. The condition for IH holds, entailing $\mathcal{M}_s, \vec{w} \nvDash \chi$. Thus $\mathcal{M}_s, \vec{h} \models \neg\K_j \chi$.
    \end{itemize}

    ($\Rightarrow$) Suppose $\mathcal{M}_s, \vec{h} \models \neg\K_j \chi$. There exists $\vec{z} \in W_s$ with $\vec{h} \sim^s_j \vec{z}$ and $\mathcal{M}_s, \vec{z} \nvDash \chi$.
    Since $\vec{z}$ is in the same $j$-cluster as $\vec{h}$ in $W'$, $\vec{z}$ is either $\vec{h}$ itself, its parent, its child, or its sibling. In all cases, $\hgt(\vec{z}) \le \hgt(\vec{h})+1 \le D - \dep(\chi)$. The condition for IH on $\vec{z}$ holds, so $\mathcal{M}_s, \vec{z} \nvDash \chi \implies \mathcal{M}, \vec{z} \nvDash \chi$.
    Since $\vec{h} \sim'_j \vec{z}$ in $\mathcal{M}$, we get $\mathcal{M}, \vec{h} \models \neg\K_j \chi$.

    \noindent\textit{Subcase 2: $\psi := \neg \Kv_j(\chi,S^k c) $.} Here $\dep(\psi) = d$, so $\dep(\chi) = d-1$.
    Assume $\hgt(\vec{h}) \le D - d$.
    ($\Leftarrow$) Suppose $\mathcal{M}, \vec{h} \models \neg \Kv_j(\chi,S^k c)$.
    \begin{itemize}
        \item If $j \neq \Ag(\vec{h})$:
        Since $\hgt(\vec{h}) \le D - d$, the selection condition is satisfied, so there exist two selected witnesses $\vec{h}_1,\vec{h}_2 \in H'_{\hgt(\vec{h})+1}$ with $\vec{h} \sim'_j \vec{h}_1$, $\vec{h} \sim'_j \vec{h}_2$, 
        $\mathcal{M}, \vec{h}_1\models \chi$, $\mathcal{M}, \vec{h}_2 \models \chi$, and $V_\mathbf{C}^\star(c,\vec{h}_1)\neq V_\mathbf{C}^\star(c,\vec{h}_2)$.
        By the IH (applicable since $\hgt(\vec{h}_\ell) \le D - \dep(\chi),\ell=1,2$), $\mathcal{M}_s, \vec{h}_1\models \chi$ and $\mathcal{M}_s, \vec{h}_2 \models \chi$.
        Since $\vec{h} \sim^s_j \vec{h}_1\sim^s_j \vec{h}_2$, we get $\mathcal{M}_s, \vec{h} \models \neg \Kv_j(\chi,S^k c)$.

        \item If $j = \Ag(\vec{h})$:
        Let $\vec{p}$ be the parent of $\vec{h}$. Then $\mathcal{M}, \vec{p} \models \neg \Kv_j(\chi,S^k c)$ and
        $\hgt(\vec{p})=\hgt(\vec{h})-1 < D-d$, so the selection condition for $\vec{p}$ is satisfied.
        Hence there exist two selected witnesses $\vec{w}_1,\vec{w}_2 \in H'_{\hgt(\vec{p})+1}$ in the same $j$-cluster with
        $\mathcal{M}, \vec{w}_1\models \chi$, $\mathcal{M}, \vec{w}_2 \models \chi$, and $V_\mathbf{C}^\star(c,\vec{w}_1)\neq V_\mathbf{C}^\star(c,\vec{w}_2)$.
        By IH, $\mathcal{M}_s, \vec{w}_1\models \chi$ and $\mathcal{M}_s, \vec{w}_2 \models \chi$.
        Since $\vec{h}$ is in the same cluster as $\vec{p}, \vec{w}_1, \vec{w}_2$, $\mathcal{M}_s, \vec{h} \models \neg \Kv_j(\chi,S^k c)$.
    \end{itemize}

    ($\Rightarrow$) Assume $\mathcal{M}_s, \vec{h} \models \neg \Kv_j(\chi,S^k c)$. By semantics, there exist
    $\vec{z}_1,\vec{z}_2 \in W_s$ such that $\vec{h}\sim^s_j\vec{z}_1$, $\vec{h}\sim^s_j \vec{z}_2$, $\mathcal{M}_s, \vec{z}_1\models \chi$, $\mathcal{M}_s, \vec{z}_2 \models \chi$, and $V_\mathbf{C}^s(c,\vec{z}_1)\neq V_\mathbf{C}^s(c,\vec{z}_2)$.
    Since $\vec{z}_1,\vec{z}_2$ are in the same $j$-cluster as $\vec{h}$, $\hgt(\vec{z}_\ell)\le \hgt(\vec{h})+1 \le D-\dep(\chi)$, the IH gives $\mathcal{M}, \vec{z}_\ell \models \chi$ ($\ell=1,2$).
    Moreover, $V_\mathbf{C}^\star(c,\vec{z}_1) = V_\mathbf{C}^s(c,\vec{z}_1) \neq V_\mathbf{C}^s(c,\vec{z}_2) = V_\mathbf{C}^\star(c,\vec{z}_2)$. Since $\vec{h} \sim'_j \vec{z}_1$ and $\vec{h} \sim'_j \vec{z}_2$ in $\mathcal{M}$, we get $\mathcal{M}, \vec{h} \models \neg \Kv_j(\chi,S^k c)$.
   
    \noindent\textit{Subcase 3: boolean connectives.} Trivial.
  \end{proof}  
    \noindent\textbf{Conclusion:}
    Since $\dep(\phi) = D$, and $\hgt(\langle w_0\rangle) = 0$, the condition $0 \le D - D$ holds. Applying the Claim with $d=D$, we get $\mathcal{M}_s, \langle w_0\rangle \models \phi$. Moreover, $|W_s| \le (2|\Sigma|)^{D+1}$.
\end{proof}
This yields the following theorem and corollary.
\begin{theorem}
    $\sys$ is sound and weakly complete w.r.t. standard models.
\end{theorem}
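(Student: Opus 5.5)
Soundness w.r.t.\ standard models has already been established, so only weak completeness remains. We argue by contraposition: suppose $\not\vdash \phi$ in $\sys$. We will show that $\neg\phi$ holds at some world of a standard model. The plan chains together the results already proved.

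First, since $\not\vdash\phi$, the singleton $\{\neg\phi\}$ is $\sys$-consistent. By the strong completeness of $\sys$ w.r.t.\ non-standard models (more precisely, by the tree-like model of Subsection~\ref{subsec:comp-nonstandard} and its Truth Lemma), $\neg\phi$ is satisfied at the root $\langle(\Gamma,1)\rangle$ of a non-standard model. Second, by Theorem~\ref{thm:fmp}, $\neg\phi$ is then satisfiable in a finite non-standard model whose size is bounded by $(2|\Sigma|)^{\dep(\neg\phi)+1}$. Third, Lemma~\ref{lemm:standard} turns this finite non-standard model into a finite standard model satisfying $\neg\phi$. Hence $\not\models\phi$ over standard models, which is the contrapositive of weak completeness. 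The same chain also yields the finite model property w.r.t.\ standard models.

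The main point to check is that no step needs more than a single formula. Lemma~\ref{lemm:standard} uses only the finitely many constants occurring in $\phi$ and the bound $B$ on successor depths in its atomic subformulas. This is exactly why the argument gives weak completeness and not strong completeness, in line with Proposition~\ref{prop:noncompact}.

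A second, minor point concerns Theorem~\ref{thm:fmp}. That theorem is applied to the canonical tree-like model, so we must make sure that the satisfiability obtained in the first step is satisfiability in that particular model. It is, because the Truth Lemma was proved for it.

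Beyond these checks the argument is a direct composition of earlier results. I do not expect a genuine obstacle, apart from verifying that Lemma~\ref{lemm:standard} applies to $\neg\phi$ as a subformula-closed case; this is immediate, since its induction ranges over all subformulas.
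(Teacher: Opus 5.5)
Your proposal is correct and follows essentially the same route as the paper. Soundness comes from the earlier soundness theorem. For weak completeness, a consistent $\neg\phi$ is satisfied in the canonical tree-like non-standard model; Theorem~\ref{thm:fmp} then yields a finite non-standard model, and Lemma~\ref{lemm:standard} converts that into a finite standard model.
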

\begin{proof}
    Given a consistent formula $\phi$, it is satisfiable in a non-standard model. Then by Theorem \ref{thm:fmp}, it is satisfiable in a finite non-standard model. By Lemma \ref{lemm:standard}, it is satisfiable in a finite standard model.
\end{proof}

\begin{corollary}
    $\logic$ has finite model property w.r.t. standard models.
\end{corollary}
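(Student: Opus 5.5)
The corollary follows by composing results already established, so the plan is a direct chain argument rather than any new construction. Fix a formula $\phi$ that is satisfiable in some standard model. The goal is a finite standard model of $\phi$, and ideally one whose size is explicitly bounded by $(2|\Sigma|)^{\dep(\phi)+1}$ as in Theorem~\ref{thm:fmp}.

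\textbf{Step 1: $\phi$ is consistent in $\sys$.} Since $\sys$ is sound with respect to standard models, $\neg\phi$ is not derivable; otherwise $\neg\phi$ would be valid over standard models, contradicting the satisfiability of $\phi$. So $\{\phi\}$ is $\sys$-consistent.

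\textbf{Step 2: $\phi$ has a finite non-standard model.} By strong completeness with respect to non-standard models, $\phi$ is satisfied at the root of the tree-like model of Subsection~\ref{subsec:comp-nonstandard}. Theorem~\ref{thm:fmp} then gives a finite non-standard model $\mathcal{M}_s$ of $\phi$ with $|W_s|\le(2|\Sigma|)^{\dep(\phi)+1}$.

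\textbf{Step 3: transfer to a finite standard model.} Apply Lemma~\ref{lemm:standard} to $\mathcal{M}_s$. That construction keeps the world set $W$ and the relations $\sim_i$ unchanged and only re-maps values through the injective map $f$. The resulting standard model therefore has the same finite set of worlds, so the same size bound holds.

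This is exactly the chain already used in the proof of weak completeness over standard models, with soundness replacing the consistency assumption. I expect no step to be a genuine obstacle. The only point to check is that ``finite model'' refers to finitely many worlds, since the domain $\mathbb{N}$ is fixed and infinite in every standard model. With that reading, the preservation of $W$ in Lemma~\ref{lemm:standard} is precisely what is needed.
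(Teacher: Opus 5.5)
Your proposal is correct and is essentially the paper's own argument: the paper reads the corollary off the chain used for weak completeness over standard models. That chain runs from consistency, to the tree-like non-standard model, to a finite non-standard model by Theorem~\ref{thm:fmp}, and then to a finite standard model by Lemma~\ref{lemm:standard}, with soundness of $\sys$ supplying consistency for a standard-satisfiable $\phi$. Your extra remark that the bound $(2|\Sigma|)^{\dep(\phi)+1}$ carries over, because Lemma~\ref{lemm:standard} leaves the set of worlds unchanged, is also correct.
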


\medskip We now establish the decidability of $\logic$ using the results above.
\begin{theorem}
    \label{thm:decide}
    $\logic$ is decidable.
\end{theorem}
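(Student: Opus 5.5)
Decidability will follow from two facts: validities are recursively enumerable, since $\sys$ is a finite set of schemas and rules that can be checked mechanically, and satisfiability is decidable by a bounded model search. I will use the standard-model semantics, since that is the primary one. The key point is that, by Theorem~\ref{thm:fmp}, a formula $\phi$ that is satisfiable in a non-standard model is satisfiable in one with at most $(2|\Sigma|)^{\dep(\phi)+1}$ worlds. Lemma~\ref{lemm:standard} then turns that model into a standard model with the same worlds and relations. So $\phi$ is satisfiable in some model iff it is satisfiable in a standard model of that bounded size. By weak completeness, $\vdash\phi$ iff $\neg\phi$ has no such model.

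The remaining problem is that even a finite standard model has infinitely many possible valuations $V_\mathbf{C}$ into $\mathbb{N}$. I will cut this down to a finite search. Only the constants occurring in $\phi$ matter. Following the proof of Lemma~\ref{lemm:standard}, the truth of $\phi$ depends only on three pieces of data:
\begin{itemize}
\item which pairs $(c,w)$ receive equal values (for the $\Kv$ clauses);
\item whether $V(c,u)+a=V(d,u)+b$ for $a+b\le B$;
\item whether $V(c,u)+a=b$, and the analogous cases involving $0$.
\end{itemize}
Values larger than $B$ times the number of relevant pairs can therefore be compressed. Concretely, take a satisfying valuation and sort the finitely many relevant values. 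Wherever two consecutive values differ by more than $B+1$, shrink the gap to exactly $B+1$, and do the same for the distance of the smallest value from $0$. This map preserves equalities and all differences of size at most $B$, and it never creates new ones, so the inductive argument of Lemma~\ref{lemm:standard} shows that $\phi$ keeps its truth value. After compression, every value is bounded by $N\cdot(B+1)$, where $N=|W|\cdot|\text{constants of }\phi|+1$.

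The decision procedure is then as follows. Enumerate all frames with at most $(2|\Sigma|)^{\dep(\phi)+1}$ worlds, all tuples of equivalence relations for the agents occurring in $\phi$ (other agents are irrelevant), and all valuations of the constants of $\phi$ with values up to the computed bound. For each, model-check $\neg\phi$ directly from the semantics, which is clearly effective on finite data. The formula $\phi$ is valid iff no candidate satisfies $\neg\phi$.

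The main obstacle is justifying the compression step. I have to check that the gap-shrinking map respects every atomic subformula, including those with $0$ on one side, where values near $0$ must be left exact. The safest route is to fix all values $\le B$ exactly and compress only the gaps above them. The $\Kv$ case is then immediate, because the map is injective.
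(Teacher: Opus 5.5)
Your proposal is correct and follows the paper's proof essentially step for step. You use a computable bound on the number of worlds from Theorem~\ref{thm:fmp} together with Lemma~\ref{lemm:standard}, then an injective, order-preserving gap-compression of the finitely many relevant values that fixes $0$ and keeps every difference of size at most $B$ (so all values are bounded by about $(|C_\phi|\cdot|W|+1)(B+1)$), and finally exhaustive model checking over the finitely many candidates. The differences are cosmetic: the recursive-enumerability remark is unnecessary because the bounded search alone decides the question, and you phrase the result as deciding $\vdash\phi$ via weak completeness rather than deciding satisfiability directly.
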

\begin{proof}
    Given a satisfiable formula $\phi$, it is satisfiable in a finite standard model $\mathcal{M}= \langle W, \mathbb{N}, \mathbf{S}, \{\sim_i\}_{i\in I}, V_\mathbf{C}\rangle$
    of size $f(|\phi|)$, where $f$ is a computable function. Let $C_\phi$ be the set of constants occurring in $\phi$. Let $B= \max(\{a+b\mid \exists  x,y\in \mathbf{C}\cup\{0\},\ S^a x \approx S^b y  \text{ is a subformula of }\phi\}\cup \{0\})$.
    Let $U=\{V_\mathbf{C}(c,w)\mid c \in C_\phi, w\in W\}\cup\{0\}$. We have $|U| \le |C_\phi|\cdot f(|\phi|)+1$. Construct an injective map $g:U\to \mathbb{N}$ as follows: 
    Start from $g(0)=0$, consider the least number $x\in U$ such that $g(x)$ is not defined. Suppose the greatest number $y\in U$ with $g(y)$ defined is $g(y)=m$. If $x-y \le B$, let $g(x)=m+(x-y)$; otherwise, let $g(x)=m+B+1$. 
    Let $\mathcal{M}'= \langle W, \mathbb{N}, \mathbf{S}, \{\sim_i\}_{i\in I}, V'_\mathbf{C}\rangle$ be the new model defined by $V'_\mathbf{C}(c,w)=g(V_\mathbf{C}(c,w))$ for each $c\in C_\phi$ and $w\in W$. By a similar argument as in the proof of Lemma \ref{lemm:standard}, we can show that $\mathcal{M}', w \models \phi$. Hence if $\phi$ is satisfiable, it is satisfiable in a finite standard model, where the size of the model is bounded by $f(|\phi|)$ and $V_\mathbf{C}(c,w)\in \{0,1,\cdots, (|C_\phi|\cdot f(|\phi|) +1)\cdot (B+1)\}$ for each $c\in C_\phi$ and $w\in W$. We can then enumerate all such models 
    (the size of the model is bounded, the codomain of $V_\mathbf{C}$ is bounded, and we only need to consider
         agents occurring in $\phi$ and constants in $C_\phi$) and check if $\phi$ is satisfiable in any of them. 
    There are only finitely many such models, and checking satisfiability in each model is decidable, so we can decide the satisfiability of $\phi$.
    Hence $\logic$ is decidable.
\end{proof}

\section{Consecutive Numbers}
\label{sec:consecutive-numbers}
In this section, we return to the original motivation of this paper, the ``consecutive numbers'' puzzle in Example~\ref{ex:consecutive-puzzle}.
Before presenting the formalization, let us first spell out the intuitive solution of the puzzle. If Anne's number were $0$, Bill's number would have to be $1$, so Anne would know it. Thus Anne's first announcement says, in effect, that her number is not $0$. Similarly, after this announcement, Bill's ignorance says that his number is neither $0$ nor $1$: if it were $0$, Anne's number would be $1$, and if it were $1$, Anne's number could only be $2$.
Anne can now use Bill's announcement. She can know Bill's number only if her own number is $1$ or $2$; for any larger number, Bill could have either the predecessor or the successor. Hence everyone can know the remaining possibilities are precisely $(a,b)=(2,3)$ and $(a,b)=(1,2)$. In either case Bill can determine Anne's number.

The key point is that each public announcement about an agent's epistemic state corresponds to the public announcement of an arithmetic fact about the two numbers. We can capture this process within our system. 
Let $\mathbb{PALKVSA}^r$ be $\sys$ plus !ATOM, !NEG, !CON, !K, $!\text{Kv}^r$, where !ATOM is $[\psi]\alpha \leftrightarrow (\psi \to \alpha)$ for atomic $\alpha$.
!COM is derivable in $\mathbb{PALKVSA}^r$. The following definition formalizes the puzzle and its conclusion.

\begin{definition}
        Let the set of premises be $\Gamma_0=\{\Kv_A a, \Kv_B b, \neg a\approx 0\rightarrow \neg \Kv_A b, \neg b\approx 0\rightarrow \neg \Kv_B a\}\cup \{\K_i(a\approx Sb\vee b\approx Sa) \mid i\in\{A,B,C\}\}$.
        Define $\Gamma_{n+1}:=\Gamma_n\cup\{\K_A\psi, \K_B\psi,\K_C\psi \mid\psi\in \Gamma_n\}$.
    The conclusion is: 
    \[[\neg \Kv_A b][\neg \Kv_B a][\Kv_A b](\Kv_B a\wedge \K_C((a\approx SS0\wedge b\approx SSS0)\vee (b\approx SS 0\wedge a\approx S0)))\]
    \end{definition}
The premise $\neg a\approx 0 \to \neg\Kv_A b$ warrants some explanation. In the puzzle, when Anne's number is 0, she can deduce that Bill's number is 1. Otherwise, she does not know Bill's number. So this premise matches the setup of the puzzle. Although this premise might not be strictly necessary, $\Gamma_0\setminus\{\neg a\approx 0\to \neg Kv_Ab,\neg b\approx 0\to \neg Kv_Ba\}$ is provably insufficient to deduce the conclusion. Moreover, this premise can simplify the proof of the conclusion, so we include it in the premises.
\begin{notation}
        Set $\chi^n_m :=  \bigl(\bigwedge_{k=0}^{m} \neg a\approx S^k 0\bigr)\wedge\bigl(\bigwedge_{k=0}^{n} \neg b\approx S^{k}0\bigr)$, $\varphi_1 := \neg \Kv_Ab$, and
            \begin{align*}
                \varphi_n := \begin{cases}
                \varphi_{n-1} \wedge [\varphi_{n-1}]\neg \Kv_B a & \text{if } n \text{ is even}\\
                \varphi_{n-1} \wedge [\varphi_{n-1}]\neg \Kv_A b & \text{if } n \text{ is odd}
                \end{cases}
                \quad
                \psi_n := \begin{cases}
                [\varphi_{n-1}]\Kv_Ba & \text{if $n$ is even}\\
                [\varphi_{n-1}]\Kv_Ab & \text{if $n$ is odd}    
                \end{cases}
            \end{align*}                                                                             
If $n$ or $m$ is less than $0$, the corresponding part is simply ignored. 
    \end{notation}  
We can solve a generalized version of the puzzle. The following proposition makes precise the intuition that each announcement in the puzzle conveys arithmetic information about the two numbers. 
The proofs are omitted, but they can be established by induction on $n$.
\begin{proposition}
    \begin{enumerate}
\item If $n$ is even, 
	$\Gamma_{n-1}\vdash (\varphi_n\leftrightarrow \chi_{n-2}^{n-1})\wedge (\psi_n\leftrightarrow \neg\chi_{n-2}^{n-1})$
	\item If $n$ is odd, $\Gamma_{n-1}\vdash (\varphi_n\leftrightarrow \chi_{n-1}^{n-2})\wedge (\psi_n\leftrightarrow \neg\chi_{n-1}^{n-2})$.
\end{enumerate}
\end{proposition}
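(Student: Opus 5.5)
We argue by a single induction on $n\ge 1$, proving both clauses together. The idea is to use the reduction axioms (!NEG, !CON, !Kv$^r$, !ATOM, RE and the derived !COM) to replace each announcement $[\varphi_{n-1}]$ by the equivalent arithmetic condition $\chi$ provided by the induction hypothesis. Then each $\Kv$-claim becomes a purely arithmetic-epistemic statement of the form $\Kv_A(\chi,b)$ or $\Kv_B(\chi,a)$, which we evaluate using the premises in $\Gamma_{n-1}$.

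\textbf{Base case} ($n=1$, odd). Here $\varphi_1=\neg\Kv_Ab$ and $\psi_1=[\varphi_0]\Kv_Ab$ with $\varphi_0$ the empty conjunction $\top$. The target condition is $\chi^{-1}_0=\neg a\approx 0$.
\begin{itemize}
\item From $\neg a\approx 0\to\neg\Kv_Ab\in\Gamma_0$ we get one direction.
\item For the other direction, assume $a\approx 0$. Then $\K_A(a\approx Sb\vee b\approx Sa)$ together with SA1 (which excludes $0\approx Sb$) gives $\K_A(a\approx 0\to b\approx S0)$.
\item We have $\Kv_Aa$. Applying KVE1 with $\varphi:=\top$ and $t_1:=a$ needs $\K_A(b\approx Sa)$, obtained by combining with $\K_A$ of the disjunction. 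This yields $\Kv_Ab$.
\end{itemize}
The delicate point is that $a\approx 0$ must itself be known by $A$. This follows from $\Kv_Aa$ and KVE2 with $t_2:=0$ (using Kv$^r$0): $a\approx 0$ gives $\Khat_A(a\approx0)$, hence $\K_A(a\approx 0)$.

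\textbf{Inductive step.} Take even $n$; the odd case is symmetric with $A,a$ and $B,b$ swapped. We have $\psi_n=[\varphi_{n-1}]\Kv_Ba$. By the induction hypothesis (available under $\Gamma_{n-2}\subseteq\Gamma_{n-1}$, and, via the $\K$-closure built into $\Gamma_{n-1}$, also under $\K_A,\K_B$), we get $\varphi_{n-1}\leftrightarrow\chi:=\chi^{n-3}_{n-2}$, where the roles of $m,n$ follow the odd clause. Applying RE inside the announcement and then !Kv$^r$ gives
\[\psi_n\leftrightarrow(\chi\to\Kv_B(\chi\wedge[\chi]\top,a)),\]
and since $[\chi]$ over a $\top$-free atomic body reduces by !ATOM, this simplifies to $\chi\to\Kv_B(\chi,a)$.

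The central arithmetic lemma is the following. Under $\Gamma_{n-1}$ and assuming $\chi$, we have $\Kv_B(\chi,a)$ iff $b\approx S^{n-1}0$; since $\chi$ already contains $\neg b\approx S^k0$ for $k\le n-2$, this is equivalent to $\neg\chi^{n-1}_{n-2}$ restricted to $\chi$.
\begin{itemize}
\item \textbf{Case $b\approx S^{n-1}0$.} $B$ knows this by KVE2, as in the base case. The two options for $a$ are then $S^{n-2}0$ and $S^n0$. Under $\chi$ the first is excluded, since $\neg a\approx S^{n-2}0$ is a conjunct. So $\K_B(\chi\to a\approx S^n0)$, and KVE1 with Kv$^r$0/Kv$^r$S gives $\Kv_B(\chi,a)$.
\item \textbf{Case $\neg b\approx S^k0$ for $k\le n-1$.} To refute $\Kv_B(\chi,a)$ we use Kv$^r$-Elim. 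Introduce a fresh $c$ with $\Kv_Bc\wedge\K_B(\chi\to c\approx a)$. Then show that both $b\approx Sa$ and $a\approx Sb$ are $B$-possible together with $\chi$. This requires $\Khat_B$-witnesses, obtained from $\neg b\approx0\to\neg\Kv_Ba$ and $\Kv_Bb$. Because $B$ knows $b$, he knows $\chi$'s $b$-part, and $a=b\pm1$ both satisfy $\chi$'s $a$-part. Deriving $\Khat_B(\chi\wedge a\approx Sb)$ and $\Khat_B(\chi\wedge b\approx Sa)$ then leads, via KVE2 on $c$, to $Sb\approx$ predecessor of $b$, contradicting SA3.
\end{itemize}

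Finally, $\varphi_n=\varphi_{n-1}\wedge[\varphi_{n-1}]\neg\Kv_Ba\leftrightarrow\chi\wedge\neg(\text{that condition})$ by !NEG, which is $\chi^{n-1}_{n-2}$, as required.

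\textbf{Main obstacle.} I expect the hardest step to be the negative direction of the lemma: producing the two $\Khat_B$ witnesses syntactically. The premise $\neg\Kv_Ba$ only says $B$ considers two $a$-values possible, not that both neighbours of $b$ are possible. My first attempt would be to combine $\neg\Kv_Ba$, the knowledge $\K_B(a\approx Sb\vee b\approx Sa)$, and $\K_B(b\approx S^mb)$-style knowledge from $\Kv_Bb$ via KVE2, in order to show that if $B$ ruled out one neighbour he would know $a$. A second step would then check that the excluded values lie outside the range cut off by $\chi$.
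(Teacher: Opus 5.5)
\paragraph{Main issue: an off-by-one in the central lemma.} Your plan matches the paper's bare indication ``by induction on $n$'': reduce $\psi_n$ by !Kv$^r$ to $\chi\to\Kv_B(\chi,a)$, then settle $\Kv_B(\chi,a)$ by a case analysis on $b$. But the central arithmetic lemma of your inductive step is off by one.

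For even $n$, $\chi=\chi^{n-3}_{n-2}$ excludes $a\approx S^k0$ for $k\le n-2$, as you correctly use in your first case. It excludes $b\approx S^k0$ only for $k\le n-3$, not for $k\le n-2$ as you claim. So $b\approx S^{n-2}0$ is compatible with $\chi$, and in that case $\Kv_B(\chi,a)$ holds:
\begin{itemize}
\item $B$ knows $b$ by KVE2.
\item The lower neighbour $a\approx S^{n-3}0$ is excluded by $\chi$. For $n=2$, where $b\approx 0$, SA1 instead rules out $b\approx Sa$.
\item Hence $\K_B(\chi\to a\approx S^{n-1}0)$, and KVE1 applies.
\end{itemize}
Concretely, for $n=2$ take a world with $a=1$, $b=0$ in a model of $\Gamma_1$. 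It satisfies $\chi=\neg a\approx 0$ and $\psi_2$, but not $b\approx S0$. So your lemma ``$\Kv_B(\chi,a)$ iff $b\approx S^{n-1}0$'' is false. Your final step would then give $\varphi_n\leftrightarrow\chi^{n-3}_{n-2}\wedge\neg b\approx S^{n-1}0$, which is not $\chi^{n-1}_{n-2}$.

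The fix is to make the positive case $b\approx S^{n-2}0\vee b\approx S^{n-1}0$, treating both disjuncts like your first case. Then $\chi\wedge\neg(b\approx S^{n-2}0\vee b\approx S^{n-1}0)$ is exactly $\chi^{n-1}_{n-2}$. The odd case is symmetric, with $\chi=\chi^{n-2}_{n-3}$ and positive case $a\approx S^{n-2}0\vee a\approx S^{n-1}0$.

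\paragraph{Smaller points.}
\begin{itemize}
\item RE only rewrites with theorems. So you cannot replace $\varphi_{n-1}$ by $\chi$ inside $[\varphi_{n-1}]$ using an equivalence that depends on $\Gamma_{n-2}$. Apply !Kv$^r$ first. Then use $\Gamma_{n-1}\vdash\K_B(\varphi_{n-1}\leftrightarrow\chi)$, which comes from the $\K$-closure as you note, together with DISTKv$^r$ in both directions.
\item Your ``main obstacle'' is avoidable. In the case $\neg b\approx S^k0$ for all $k\le n-1$, you already observe that $B$ knows $\chi$. So DISTKv$^r$ with $\K_B(\top\to\chi)$ turns $\Kv_B(\chi,a)$ into $\Kv_Ba$, contradicting the premise $\neg b\approx 0\to\neg\Kv_Ba$. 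No Kv$^r$-Elim and no $\Khat_B$-witnesses are needed. Your route does work, but it is longer.
\item In the base case, KVE1 must be applied with $t_1:=Sa$ (via Kv$^r$S) or with $t_1:=S0$, not with $t_1:=a$.
\end{itemize}
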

\medskip
With !COM, this yields the following solution in $\mathbb{PALKVSA}^r$.
    \begin{proposition}
	\begin{enumerate}
		\item If $n$ is even, $\Gamma_n\vdash \underbrace{[\neg Kv_A b][\neg Kv_B a]\cdots [\neg Kv_B a]}_{n} [Kv_Ab](Kv_B a\wedge K_C((a\approx S^n0\wedge b\approx S^{n+1}0)\vee (b\approx S^n 0\wedge a\approx S^{n-1}0)))$. 
		\item If $n$ is odd, $\Gamma_n\vdash \underbrace{[\neg Kv_A b][\neg Kv_B a]\cdots [\neg Kv_A b]}_{n} [Kv_Ba](Kv_A b\wedge K_C((b\approx S^n0\wedge a\approx S^{n+1}0)\vee (a\approx S^n 0\wedge b\approx S^{n-1}0)))$.
	\end{enumerate}
\end{proposition}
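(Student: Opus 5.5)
The plan is to prove item~1 ($n$ even) in detail; item~2 is obtained by swapping the roles of $A,a$ and $B,b$. The idea is to collapse the announcement sequence into a single announcement, use the preceding proposition to replace that announcement by a purely arithmetic formula, and then reason inside the announcement with KVE1 and $\text{Kv}^r\text{S}$.

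\emph{Step 1: collapsing the announcements.} Since $\varphi_k=\varphi_{k-1}\wedge[\varphi_{k-1}]\theta_k$, where $\theta_k$ is the $k$-th announced ignorance formula, !COM gives $[\varphi_{k-1}][\theta_k]X\leftrightarrow[\varphi_k]X$ for every $X$. By induction on $k$, and using RE to rewrite inside the outer boxes, the $n$ nested announcements become $[\varphi_n]$. One more application of !COM gives
\[
[\varphi_n][\Kv_A b]Y\leftrightarrow[\varphi_n\wedge[\varphi_n]\Kv_A b]Y ,
\]
and $[\varphi_n]\Kv_A b$ is exactly $\psi_{n+1}$. So the target is equivalent to $[\varphi_n\wedge\psi_{n+1}]Y$, where $Y$ is the conjunction $\Kv_B a\wedge\K_C(\dots)$.

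\emph{Step 2: arithmetizing the announcement.} The preceding proposition for $n$ (even) gives $\Gamma_{n-1}\vdash\varphi_n\leftrightarrow\chi^{n-1}_{n-2}$. For $n+1$ (odd) it gives $\Gamma_n\vdash\psi_{n+1}\leftrightarrow\neg\chi^{n-1}_{n}$. Hence, under $\Gamma_n$, the announced formula is equivalent to
\[
\theta:=\chi^{n-1}_{n-2}\wedge\neg\chi^{n-1}_{n},
\]
which says that $a\notin\{0,\dots,n-2\}$, $b\notin\{0,\dots,n-1\}$, and $a\in\{S^{n-1}0,S^n0\}$.

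Using $\K_i(a\approx Sb\vee b\approx Sa)\in\Gamma_n$ together with SA1–SA3, SI and ID, I will show that $\theta$ is provably equivalent inside $\K_A$, $\K_B$ and $\K_C$ to
\[
\theta':=(a\approx S^{n-1}0\wedge b\approx S^n0)\vee(a\approx S^n0\wedge b\approx S^{n+1}0).
\]
The first disjunct arises because $a\approx S^{n-1}0$ forces $b\approx S^n0$, since $b\approx S^{n-2}0$ is excluded. The second arises because $a\approx S^n0$ forces $b\approx S^{n+1}0$, since $b\approx S^{n-1}0$ is excluded.

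\emph{Step 3: replacing the announcement under $\Gamma_n$ (main obstacle).} These equivalences hold only relative to $\Gamma_n$, not as theorems, so RE cannot be applied directly. Instead I will unfold $[\theta^\ast]Y$, where $\theta^\ast$ is the announced formula, with the reduction axioms:
\[
[\theta^\ast]Y\leftrightarrow\bigl(\theta^\ast\to\Kv_B(\theta^\ast\wedge[\theta^\ast]\top,a)\wedge\K_C(\theta^\ast\to Z)\bigr),
\]
with $Z$ the atomic target. Here $[\theta^\ast]\top$ reduces to $\top$, and atoms reduce by !ATOM. The needed equivalences $\theta^\ast\leftrightarrow\theta'$ then occur only at top level and directly under one $\K_B$ or $\K_C$. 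There $\Gamma_n$ supplies them: $\Gamma_n$ contains $\K_i$-prefixed copies of $\Gamma_{n-1}$, and the derivation of the preceding proposition can be carried out under $\K_i$ by NECK and DISTK. Together with DISTKv$^r$, this lets me replace $\theta^\ast$ by $\theta'$. The bookkeeping that the $\K$-closure depth in $\Gamma_n$ suffices is where I expect the real work. If it turns out to be too shallow, the fallback is to carry out the derivation of the preceding proposition inside the $\K$-closures, which is where the extra layers of $\Gamma_n$ over $\Gamma_{n-1}$ are used.

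\emph{Step 4: finishing.} First, $\K_C(\theta'\to Z)$ holds by NECK, since $\theta'\to Z$ is a tautology up to commuting the disjuncts. Second, for $\Kv_B(\theta',a)$: we have $\theta'\to b\approx Sa$ by SI and symmetry, so NECK gives $\K_B(\theta'\to b\approx Sa)$. With $\Kv_B b\in\Gamma_0$, KVE1 yields $\Kv_B(\theta',Sa)$, and $\text{Kv}^r\text{S}$ yields $\Kv_B(\theta',a)$. Reassembling through the equivalences of Steps 1–3 gives item~1, and item~2 follows symmetrically.
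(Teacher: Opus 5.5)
Your Steps 1, 2 and 4 are correct and follow the route the paper intends: collapse the announcements with !COM, then apply the preceding proposition. Step 3, however, is a genuine gap, and it is not just bookkeeping.

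To obtain $\Kv_B(\theta^\ast,a)$ and $\K_C(\theta^\ast\to Z)$ you need $\K_B(\theta^\ast\to\theta')$ and $\K_C(\theta^\ast\to\theta')$, where $\theta^\ast=\varphi_n\wedge\psi_{n+1}$. For the conjunct $\varphi_n$ this is fine: $\varphi_n\leftrightarrow\chi^{n-1}_{n-2}$ comes from $\Gamma_{n-1}$, and $\Gamma_n$ contains the $\K_B$- and $\K_C$-copies of $\Gamma_{n-1}$. But the preceding proposition supplies $\psi_{n+1}\leftrightarrow\neg\chi^{n-1}_{n}$ only from $\Gamma_n$. Boxing it therefore needs $\K_B\gamma$ and $\K_C\gamma$ for $\gamma\in\Gamma_n$, and those formulas are in $\Gamma_{n+1}$, not $\Gamma_n$.

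Your fallback of redoing the derivation inside the closures cannot help, because the consequence fails semantically. For $n=2$, take the standard model with:
\begin{itemize}
\item worlds $w,w_1,w_2,u,v,x$ carrying $(a,b)=(2,3),(2,1),(0,1),(4,3),(4,5),(6,5)$;
\item $\sim_A$-classes $\{w,w_1\},\{w_2\},\{u,v\},\{x\}$;
\item $\sim_B$-classes $\{w,u\},\{w_1,w_2\},\{v,x\}$;
\item $\sim_C$ the identity.
\end{itemize}
Consecutiveness, $\Kv_A a$ and $\Kv_B b$ hold at every world. The only failure of $\Gamma_0$ is the premise $\neg a\approx 0\to\neg\Kv_A b$ at $x$. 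That world lies at distance $3$ from $w$ (via $w\sim_B u\sim_A v\sim_B x$), so $w\models\Gamma_2$.

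Now run the announcements. Announcing $\neg\Kv_A b$ deletes $w_2$ and $x$. Announcing $\neg\Kv_B a$ then deletes $w_1$ and $v$. $\Kv_A b$ holds at both survivors $w$ and $u$. In the final model the $B$-class $\{w,u\}$ of $w$ has $a\in\{2,4\}$, so $\Kv_B a$ fails at $w$. By soundness, $\Gamma_2$ does not derive the puzzle's conclusion. Item 2 fails in the same way; for instance, a similar model refutes it for $n=1$ with $\Gamma_1$.

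So the statement as written is off by one. The paper's one-line justification (``With !COM'') passes over exactly the point you flagged.

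With $\Gamma_{n+1}$ in place of $\Gamma_n$, your argument goes through unchanged:
\begin{itemize}
\item $\K_B\Gamma_n\cup\K_C\Gamma_n\subseteq\Gamma_{n+1}$ gives the boxed equivalences by NECK and DISTK;
\item the formulas $\K_i(a\approx Sb\vee b\approx Sa)$ already in $\Gamma_0$ give $\theta\to\theta'$ under the boxes;
\item your KVE1 plus $\mathrm{Kv}^r\mathrm{S}$ finish in Step 4 is correct.
\end{itemize}
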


\section{Conclusion and Future Work}
\label{sec:conclusion}
In this paper, we introduced equality and the successor function symbol into knowing-value logic, resulting in the logic $\logic$. We provided an axiom system $\sys$, which we proved to be strongly complete with respect to non-standard models and weakly complete with respect to standard models. By formalizing and solving the ``Consecutive Numbers'' puzzle, we demonstrated the enhanced expressiveness of our logic. Furthermore, we established the decidability of $\logic$, illustrating that our framework achieves a balance between expressiveness and computational complexity. 

For future work, the language can be further enriched with addition and multiplication symbols, which would enable the formalization and resolution of more complex problems, such as the well-known ``Sum and Product'' puzzle. 
In our current logic, $\Kv_i(\phi,S^kc)$ is equivalent to $\Kv_i(\phi,c)$, which significantly simplifies the completeness proof. However, when extending the language with addition and multiplication, an agent may know the sum or product of two numbers without knowing the exact value of the numbers themselves. Furthermore, addition and multiplication give rise to more complex arithmetic structures.
Consequently, establishing completeness and decidability for such an extended logic will be more challenging. 
Overall, given the rich interplay between mathematical structures and epistemic reasoning, the integration of epistemic logic and arithmetic holds great potential for future research.

\section*{Acknowledgements}
I am deeply grateful to Yanjing Wang for suggesting the topic of this paper and for his encouragement and valuable advice. I thank Yifeng Ding for helpful comments during a seminar presentation. I also thank the three anonymous reviewers from AiML 2026 for their careful reading and insightful suggestions.

\newpage

\bibliographystyle{eptcs}
\bibliography{generic}
\end{document}